\newtheorem{theorem}{Theorem}[section]
\newtheorem{lemma}[theorem]{Lemma}
\newtheorem{claim}[theorem]{Claim}
\newtheorem{conjecture}[theorem]{Conjecture}
\newtheorem{definition}[theorem]{Definition}
\newcommand{\R}{\mathbb{R}}
\newcommand{\Z}{\mathbb{Z}}
\newcommand{\D}{\mathcal{D}}
\newcommand{\eps}{\varepsilon}
\newcommand{\SD}{\mathcal{S}_D}
\newcommand{\N}{\mathcal{N}}
\newcommand{\ND}{\mathcal{N}_D}
\newcommand{\F}{\mathbb{F}}
\newcommand{\E}{\mathbb{E}}
\newcommand{\ip}[1]{\langle #1 \rangle}
\newcommand{\LCD}{\text{LCD}}
\title{Singularity of random integer matrices with large entries}
\author{
Sankeerth Rao Karingula\thanks{email: sankeerth1729@gmail.com. Research supported by NSF CCF award 1614023.}
\qquad
Shachar Lovett\thanks{email: slovett@ucsd.edu. Research supported by NSF CCF award 1614023.}\\
Department of Computer Science\\
University of California, San Diego
}
\begin{document}
\maketitle
\begin{abstract}
We study the singularity probability of random integer matrices. Concretely, the probability that a random $n \times n$ matrix, with integer entries chosen uniformly from $\{-m,\ldots,m\}$, is singular. This problem has been well studied in two regimes: large $n$ and constant $m$; or large $m$ and constant $n$. In this paper, we extend previous techniques to handle the regime where both $n,m$ are large. We show that the probability that such a matrix is singular is $m^{-cn}$ for some absolute constant $c>0$. We also provide some connections of our result to coding theory.
\end{abstract}

\section{Introduction}
In this paper we study the probability that a random $n \times n$ matrix with uniform integer entries in $\{-m,\ldots,m\}$ is singular. 
Note that the probability that such a matrix is singular is at least $(2m+1)^{-n}$, which is the probability that its first two rows are the same. We show that this bound is tight, up to polynomial factors.

\begin{theorem}[Singularity of random matrices]
\label{main}
Let $n,m \ge 1$. Let $M$ be an $n \times n$ random integer matrix with entries chosen uniformly in $\{-m,\ldots,m\}$. Then for some absolute constant $c>0$,
$$
\Pr[M \text{ is singular}] \le m^{-cn}.
$$
\end{theorem}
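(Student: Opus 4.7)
\emph{Overall strategy.} The approach is the standard exposure-plus-anticoncentration template, adapted to integer entries of scale $m$. Conditional on $M$ being singular there is a nonzero vector $v$ in the right kernel; normalizing $v$ to be a primitive integer vector (or a canonical unit vector), the rows of $M$ are independent and each must satisfy $\ip{r,v}=0$, so for a fixed $v$,
\[
\Pr[Mv=0] \;\le\; \CP{v}^n, \qquad \CP{v} := \sup_{a}\,\Pr_{r\sim\mathrm{Unif}\{-m,\ldots,m\}^n}\bigl[\ip{r,v}=a\bigr].
\]
The final bound should come from summing $\CP{v}^n$ over $v$, weighted by a count of how many $v$ fall into each arithmetic ``type.''

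\emph{Main technical tool and counting.} I would establish (or import) a Halász-type inverse Littlewood--Offord estimate tuned to the scale $m$, of the shape
\[
\CP{v} \;\lesssim\; \frac{1}{m\cdot \nLCD(v)} \;+\; e^{-\Omega(n)},
\]
where $\nLCD(v)$ is a least common denominator of $v$ measured at resolution $1/m$. The proof is Fourier-inversion plus the pointwise bound $|\phi_m(t)|\le \exp(-c\|(2m{+}1)t\|_{\R/\Z}^2)$ on the characteristic function of the uniform on $\{-m,\ldots,m\}$, integrated over $t\in[0,1]$. On the counting side, the number of representative vectors with $\nLCD(v)\in[D,2D]$ should be controlled by $\mathrm{poly}(n)\cdot D^n$, so that
\[
\sum_D \#\{v:\nLCD(v)\approx D\}\cdot \CP{v}^n \;\lesssim\; \sum_D D^n\cdot (mD)^{-n} \;\le\; \mathrm{poly}(n)\cdot m^{-n},
\]
and the unstructured tail $e^{-\Omega(n^2)}$ is absorbed.

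\emph{Compressible case.} The Halász bound above is vacuous on ``compressible'' vectors, i.e.\ $v$ close to being supported on $o(n)$ coordinates, since those have $\nLCD(v)$ tiny while $\CP{v}$ is still only $\Theta(1/m)$. These are combinatorially few (roughly $\binom{n}{k}m^{O(k)}$ for support size $k$), so they are handled by a direct iterative row-exposure argument in the spirit of Odlyzko: after conditioning on a growing number of rows, the kernel is cut down fast enough to beat the $m^{-\Omega(k)}$ loss from the count. This compressible contribution is precisely the one matching the lower bound $(2m{+}1)^{-n}$.

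\emph{Main obstacle.} The hard part is producing the Halász bound with the correct $1/m$ factor (not the naive $1/\sqrt m$) \emph{simultaneously} with the correct $1/\nLCD$ factor, and choosing a notion of $\nLCD$ that genuinely interpolates between the two classical regimes: $v$ ranging on $S^{n-1}$ (large $n$, constant $m$) and $v$ simply an integer vector with few distinct entries (constant $n$, large $m$). The associated combinatorial step---stratifying the sphere by $\nLCD$ at scale $1/m$ and building a structure-aware covering, rather than a naive $\eps$-net that would overcount by an $m^{\Omega(n)}$ factor and destroy the final bound---is where I expect most of the work to sit.
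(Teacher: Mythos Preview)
Your architecture is the same family as the paper's: split off compressible vectors, stratify incompressible ones by an LCD, build structure-aware nets on the level sets, and feed an Esseen/Hal\'asz anti-concentration bound into a union bound over net points. The paper phrases this through the random normal $X^*$ to the first $n-1$ rows rather than as a direct union over kernel vectors, but the substantive computation---tensorized small-ball probability against each net point in $\SD$, summed over a dyadic range of $D$---is the same.

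The genuine gap is your characteristic-function bound. The estimate $|\phi_m(t)|\le \exp\bigl(-c\|(2m{+}1)t\|_{\R/\Z}^2\bigr)$ treats the uniform law on $\{-m,\ldots,m\}$ as merely sub-Gaussian of variance $\Theta(m^2)$; it is never below $\exp(-c/4)$, so on the ``unstructured'' part of the Esseen integral the product over $n$ coordinates is only $e^{-\Omega(n)}$, and your Hal\'asz bound carries an additive $e^{-\Omega(n)}$. After the $n$-th power this is $e^{-\Omega(n^2)}$, which is \emph{not} $\le m^{-cn}$ once $m\gg e^{Cn}$; the theorem has to hold for all $m$, so this tail cannot be ``absorbed.'' The paper confronts exactly this: the Dirichlet kernel $F(t)=\bigl|\sin((2m{+}1)\pi t)\bigr|/\bigl((2m{+}1)|\sin\pi t|\bigr)$ has, in addition to the Gaussian behaviour near $0$, a polynomial tail $F(t)\lesssim 1/(m|t|)$ on $[1/m,1/2]$ (their \Cref{F_prop}), and it is this second regime that drives both the main $c/\gamma$ bound on the Esseen integral (\Cref{firstint}) and the upgrade of the additive error from $e^{-\Omega(n)}$ to $(\alpha\beta m)^{-\alpha n}=m^{-\Omega(n)}$ in \Cref{anticonc-LCD}. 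The paper's introduction explicitly warns that the uniform law on $\{-m,\ldots,m\}$ has ``less well-behaved Fourier tails'' than, e.g., a sum of $m$ i.i.d.\ signs, and that a sub-Gaussian-only treatment does not suffice. A secondary point: your net count $\mathrm{poly}(n)\cdot D^n$ and the resulting $m^{-n}$ are sharper than what the paper obtains; its net on $\SD$ has size $(D/\beta)(cD/\sqrt{\alpha n})^n(1/\beta)^{\alpha n}$, and the extra $(1/\beta)^{\alpha n}=m^{\Omega(n)}$ is why the final bound is $m^{-cn}$ for a small absolute $c$ rather than the optimal $(2m{+}1)^{-n}$.
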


Note that if instead we chose $M$ to be a random $n \times n$ matrix over a finite field $\F_q$, then the probability that $M$ is singular would be about $1/q$, independent of how large $n$ is. This is the main point of difference between random matrices over integers and over finite fields - the singularity probability over integers decreases as the matrix becomes larger, whereas over finite fields it stabilizes.

\subsection{Connections to coding theory - alphabet size for MDS codes}
Our motivation for proving \Cref{main} is a connection to coding theory. More specifically, the question of what alphabet size is needed for Maximum Distance Separable (MDS) codes over the integers.

Coding theory is the study of error correction codes. Codes are widely used in many applications, such as data compression, cryptography, error detection and correction, data transmission and data storage. Algorithms needed to implement codes perform arithmetic operations over an underlying alphabet, and hence their computational complexity is constrained by this alphabet size. Thus, understanding the alphabet size needed to support a given code structure is a central question in coding theory. By far, the most common approach to design codes is to use linear codes over finite fields. Our results in this paper help with investigating the possibility of designing codes over integers. In particular, we study the alphabet size needed to support basic code structures, and focus on the most basic and well-studied family of codes - Maximum Distance Separable (MDS) codes.

An MDS code is a code with the best possible tradeoff between the message length, codeword length and minimal distance. Concretely, an $(n,k,d)$-code is a code with message length $k$, codeword length $n$ and minimal distance $d$. The Singleton bound \cite{singleton1964maximum} gives that $d \le n-k+1$. MDS codes are codes achieving this bound, namely $(n,k,d)$-codes with $d=n-k+1$.
If we consider linear codes, then it is well-known that MDS codes are generated by the row span of \emph{MDS matrices}.

\begin{definition}[MDS matrix]
Let $n \ge k$. A $k \times n$ matrix is called an \emph{MDS matrix} if any $k$ columns in it are linearly independent. Equivalently, if any $k \times k$ minor of it is nonsingular.
\end{definition}

Note that MDS matrices can be defined over finite fields or over the integers. If we define them over a finite field $\F_q$, then
it is well-known that a linear field size is needed to support MDS matrices. Concretely, if we assume $n \ge k+2$, then it is known that
$q \ge \max(k, n-k+1)$ (see for example the introduction of \cite{ball2012sets} for a proof). In particular, this implies that $q \ge n/2$.
Reed-Solomon codes can be constructed over fields of size $q \ge n-1$, which is tight up to a factor of two.
The MDS conjecture of Segre \cite{segre1955curve} speculates that this is indeed the best possible (except for a few special cases), and Ball \cite{ball2012sets} proved this over prime finite fields. In summary, over finite fields a linear field size $q=\Theta(n)$ is both necessary and sufficient.

We show that over the integers, MDS matrices exist over much smaller alphabet sizes.

\begin{theorem}[MDS matrices over integers]
\label{mds_integer}
Let $n \ge k$. There exist $k \times n$ MDS matrices over integers whose entries are in $\{-m,\ldots,m\}$, where $m \le  (cn/k)^c$ for some absolute constant $c>0$.
\end{theorem}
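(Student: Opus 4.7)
The plan is to prove the existence of such MDS matrices by the probabilistic method, using \Cref{main} as a black box. Specifically, I would let $M$ be a random $k \times n$ matrix with entries chosen independently and uniformly from $\{-m,\ldots,m\}$, and show that for an appropriate choice of $m = O((n/k)^{c'})$, the probability that $M$ fails to be an MDS matrix is strictly less than $1$.

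The key step is a union bound over all $\binom{n}{k}$ subsets $S \subseteq [n]$ of $k$ columns. For each such $S$, the $k \times k$ submatrix $M_S$ formed by those columns is itself a uniformly random $k \times k$ integer matrix with entries in $\{-m,\ldots,m\}$, so \Cref{main} applies directly and yields $\Pr[M_S \text{ is singular}] \le m^{-ck}$, where $c>0$ is the absolute constant from \Cref{main}. Combining with the standard estimate $\binom{n}{k} \le (en/k)^k$, the union bound gives
\[
\Pr[M \text{ is not MDS}] \le \binom{n}{k} m^{-ck} \le \left(\frac{en}{k}\right)^k m^{-ck} = \left(\frac{en}{k \cdot m^c}\right)^k.
\]
This is less than $1$ provided $m^c > en/k$, i.e., $m > (en/k)^{1/c}$. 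Choosing $m = \lceil (en/k)^{1/c} \rceil$ therefore guarantees the existence of a deterministic MDS matrix with the stated entry bound, and adjusting the absolute constant in the statement (taking $C = \max(e, 1/c)$ and then writing the bound in the form $(Cn/k)^C$) yields the claimed form $m \le (cn/k)^c$.

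There is essentially no obstacle beyond correctly applying \Cref{main} — the entire argument reduces to the observation that a $k \times k$ submatrix of a uniformly random matrix has the uniform distribution, together with one union bound. The only minor point worth noting is that the theorem statement requires $n \ge k$, but the degenerate cases (e.g.\ $k=1$ or $n=k$) are easily handled: for $k=1$ any nonzero entry works, and for $n=k$ the statement follows directly from \Cref{main} since a single matrix suffices.
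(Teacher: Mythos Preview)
Your proposal is correct and matches the paper's own proof almost verbatim: the paper also takes a random $k\times n$ matrix, applies \Cref{main} to each $k\times k$ minor, union-bounds over the $\binom{n}{k}\le(en/k)^k$ minors to get $\Pr[M\text{ not MDS}]\le(en/(km^c))^k$, and then chooses $m\ge(2en/k)^{1/c}$. The only cosmetic difference is that the paper makes the probability at most $2^{-k}$ rather than merely $<1$.
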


The typical regime in coding theory is that of linear rate and linear distance; namely, where $k=\alpha n$ for some constant $\alpha \in (0,1)$. Note that in this regime \Cref{mds_integer} shows that MDS codes over the integers exist with a \emph{constant} alphabet size, which is in stark contrast with the case over finite fields. It is easy to see that \Cref{mds_integer} is best possible, up to the unspecified constant $c$.

\Cref{mds_integer} follows directly from \Cref{main}.

\begin{proof}[Proof of \Cref{mds_integer}]
Let $M$ be a random $k \times n$ matrix with entries chosen uniformly from $\{-m,\ldots,m\}$. The number of $k \times k$ minors for $M$ is $\binom{n}{k}$, and the probability that each one is singular is at most $m^{-ck}$ by \Cref{main}. Thus
$$
\Pr[M \text{ is not MDS}] \le \binom{n}{k} m^{-ck} \le \left(\frac{en}{k}\right)^k m^{-ck} = \left(\frac{en}{k m^c}\right)^k.
$$
In particular, this probability is at most $2^{-k}$ (say) whenever $m \ge (2en/k)^{1/c}$.
\end{proof}

The following claim shows that the bound in \Cref{mds_integer} is best possible, up to the value of the unspecified constant $c$.

\begin{claim}
\label{mds_lowerbound}
Let $n \ge k \ge 2$.
Let $M$ be a $k \times n$ MDS matrix whose entries are in an alphabet $\Sigma$. Then $|\Sigma| \ge \sqrt{n/k}$.
\end{claim}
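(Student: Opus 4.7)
The plan is a short two-step pigeonhole argument on the first two rows of $M$, exploiting that the $k \times k$ submatrix on any $k$ columns (using, say, row indices $1, \ldots, k$) must be nonsingular, and in particular cannot have two proportional rows.

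First I would apply pigeonhole to the $n$ entries of row $1$: some value $a \in \Sigma$ must appear in at least $n/|\Sigma|$ columns, giving a column set $C_1$. Then I would apply pigeonhole again to row $2$ restricted to $C_1$: some value $b \in \Sigma$ appears in at least $|C_1|/|\Sigma| \ge n/|\Sigma|^2$ of these columns, giving a set $C_2 \subseteq C_1$. By construction, on the columns of $C_2$ rows $1$ and $2$ of $M$ are the constant vectors $(a,\ldots,a)$ and $(b,\ldots,b)$.

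Now suppose $|C_2| \ge k$. Then picking any $k$ columns from $C_2$ together with the top $k$ rows yields a $k \times k$ submatrix whose first two rows are each constant, hence proportional (if $a \ne 0$ then $(b,\ldots,b) = (b/a)(a,\ldots,a)$; if $a = 0$ the first row is the zero vector). Using $k \ge 2$, the submatrix is singular, contradicting the MDS property. Therefore $|C_2| < k$, i.e., $n/|\Sigma|^2 < k$, which rearranges to $|\Sigma| > \sqrt{n/k}$.

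I do not anticipate a substantive obstacle: the whole argument is a one-shot double pigeonhole. The only minor point worth verifying carefully is that two constant rows are always linearly dependent regardless of which of $a, b$ vanish, which is immediate by splitting on whether $a = 0$.
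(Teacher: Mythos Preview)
Your argument is correct and is essentially the paper's own proof: the paper applies a single pigeonhole to the pairs $(M_{1,i},M_{2,i})\in\Sigma^2$ to find $k$ columns with identical top-two entries, whereas you apply pigeonhole twice (row~1, then row~2 on the surviving columns), which is the same thing unwound. The conclusion---two constant (hence proportional) rows in a $k\times k$ minor---is identical.
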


\begin{proof}
Let $P_i=(M_{1,i},M_{2,i}) \in \Sigma^2$ denote the first two elements in the $i$-th column of $M$. If $n > |\Sigma|^2 k$, then there must be $k$ distinct columns $i_1,\ldots,i_k \in [n]$ such that $P_{i_1}=\ldots=P_{i_k}$. But then $M$ cannot be an MDS matrix, as the $k \times k$ minor formed by taking these columns has the first two rows being a scalar multiple of each other, and hence cannot be nonsingular.
\end{proof}

Note that the proof of \Cref{mds_integer} is by choosing the matrix $M$ randomly, and showing that with high probability it will be an MDS matrix. This is another aspect in which codes over integers seem to be different from codes over finite fields. Constructing MDS matrices over finite fields seems to require algebraic constructions (such as Reed-Solomon codes), unless the field size is exponential in $n$; whereas over the integers, random matrices work well even for very small entries.

\subsection{Related works on the singularity of random matrices}
Most previous works in random matrix theory focused on random matrices whose entries are sampled independently from distributions with bounded tail behaviour. The most studied case is that of random sign matrices, namely with uniform $\{-1,1\}$ entries. Koml{\'o}s \cite{komlos1967determinant} proved
that the probability that a random $n \times n$ sign matrix is singular is $o(1)$ as $n \to \infty$, which already is a nontrivial result. It took nearly 30 years until
Kahn, Koml{\'o}s and Szemer{\'e}di \cite{kahn1995probability} improved the bound to $c^n$ for some constant $c \in (0,1)$.
A sequence of works \cite{tao2006random,tao2007singularity,bourgain2010singularity} improved the value of the constant $c$, and recently
Tikhomirov \cite{tikhomirov2020singularity} proved that $c=1/2+o(1)$, which is best possible, as the probability
that the first two rows of the matrix are equal is $2^{-n}$. For more general distributions, Rudelson and Vershynin \cite{rudelson2008littlewood,rudelson2010non} proved that if the entries of an $n \times n$ matrix
are sampled from a sub-Gaussian distribution, then the probability it is singular is at most $c^n$ for some $c \in (0,1)$. 

The other regime, of large $m$ and constant $n$, was less explored. The only work we are aware of is by Katznelson \cite{katznelson1993singular} which gave a bound of the form $c_n m^{-n}$
for some constant $c_n$ depending on $n$. While having optimal dependence on $m$ for constant $n$, it has a caveat - it only applies in the regime where $m$ is much larger than $n$
(more precisely, for every fixed $n$, in the limit of large $m$).

A recent work that did address the regime of both large $n$ and large $m$, but for a different entry distribution, is that of Vempala, Wang and Woodruff \cite{vempala2020communication}. Fix $\mu \in (0,1)$, and let $\D_{\mu}$ be a distribution over $\{-1,0,1\}$ with $\D_{\mu}(0)=1-\mu$, $\D_{\mu}(1)=\D_{\mu}(-1)=\mu/2$. Let $M$ be a random $n \times n$ matrix, whose entries are the sum of $m$ independent copies of $\D_{\mu}$.
They show that the probability that $M$ is singular is at most $m^{-cn}$ for some constant $c=c(\mu)>0$. 

With respect to the connection to coding theory, we note that this result is sufficient to prove \Cref{mds_integer}. However, we view the entry distribution in \Cref{main} (uniform in $\{-m,\ldots,m\})$ as more natural for coding applications. In fact, we conjecture that any entry distribution that doesn't give too much probability to any specific element would do, see \Cref{conj_general} for the details.

\subsection{Proof techniques}
We prove \Cref{main} following the approach of Rudelson and Vershynin \cite{rudelson2008littlewood,rudelson2010non}, in particular following the lecture notes of Rudelson \cite{rudelson2013lecture} modified appropriately to handle the case of large $m$. On the other hand, Vempala {et al.} \cite{vempala2020communication} follow the approach of Kahn {et al.} \cite{kahn1995probability} and Tao and Vu \cite{tao2007singularity}. We briefly discuss the difference between the two approaches below.

At a high level, both approaches aim to study ``approximate periodicity'' of random vectors. However, they take different routes. The first approach is more direct, using the notion of Lowest Common Denominator (LCD) to define and study approximate periodicity. The second approach is indirect, using Fourier analysis. Fourier analytic techniques seem useful when the underlying entry distribution has well-behaved Fourier tails; for example, this is the case for the distribution considered in \cite{vempala2020communication}, where the entries are a sum of $m$ independent copies of a distribution over $\{-1,0,1\}$. However, the distribution we consider in this paper, uniform in $\{-m,\ldots,m\}$, has less well-behaved Fourier tails, and Fourier analytic techniques seem less suited to analyze it.

\subsection{Directions for further research and applications}

\paragraph{Singularity of matrices over general distributions.}
As we discussed above, most works on the singularity of random matrices give a bound on the singularity of $c^n$ for some absolute constant $c \in (0,1)$. \Cref{main} shows that if the entries are uniformly sampled from $\{-m,\ldots,m\}$, we can take $c=1/\text{poly}(m)$. We speculate that this is an instance of a much more general phenomena - the singularity probability is determined by the anti-concentration of the underlying entries distribution. Given a distribution $\D$ over $\R$, define its max-probability as $\|\D\|_{\infty} = \max_x \D(x)$. For example, if $\D$ is the uniform distribution over $\{-m,\ldots,m\}$, then $\|\D\|_{\infty}=1/(2m+1)$.

\begin{conjecture}\label{conj_general}
Let $\D$ be a distribution over $\R$ and set $p=\|D\|_{\infty}$. Let $M$ be a random $n \times n$ matrix with independent entries from $\D$. Then for some absolute constant $c>0$,
$$
\Pr[M \text{ is singular}] \le p^{cn}.
$$
\end{conjecture}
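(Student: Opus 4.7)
The plan is to extend the Rudelson--Vershynin / LCD framework used in the proof of \Cref{main} from the specific case $\D=\mathrm{Unif}(\{-m,\ldots,m\})$ to an arbitrary distribution $\D$ with $\|\D\|_\infty=p$. Singularity is scale-invariant, so the only parameter that can appear in the bound is $p$; the proof of \Cref{main} exploits both that each atom of $\D$ has mass at most $1/(2m+1)$ and that the typical scale of $\D$ is $m$, and the content of the conjecture is essentially that only the former matters qualitatively. I would try to isolate every step in the existing argument where the scale $m$ is used and replace it by a scale-free argument driven only by the max-probability hypothesis.

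First I would carry out the standard sphere decomposition: split $S^{n-1}$ into \emph{compressible} vectors (those within distance $\rho$ of a vector supported on at most $\delta n$ coordinates) and \emph{incompressible} vectors. For compressible vectors, I would combine an $\eps$-net of cardinality roughly $\binom{n}{\delta n}(C/\eps)^{\delta n}$ with the trivial small-ball bound $\sup_u \Pr[\langle X,v\rangle = u]\le p$ applied on any large coordinate of $v$; with $\eps\asymp p$ and $\delta,\rho$ sufficiently small absolute constants, this should show that $M$ is invertible on compressible vectors except with probability $p^{\Omega(n)}$.

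For incompressible vectors the key is a Hal\'asz-type anti-concentration inequality with explicit dependence on $p$. I would define a distribution-adapted LCD $\LCD_\D(v)$ that measures how close $\theta v$ can be to a vector whose entries lie near atoms of $\D$, and try to establish a small-ball bound of the form
\[
\sup_u \Pr\bigl[\langle X,v\rangle = u\bigr] \;\le\; \bigl(p + 1/\LCD_\D(v)\bigr)^{\Omega(n)}
\]
via the usual characteristic-function argument: a pointwise bound of the form $|\hat{\D}(\xi)|^2 \le 1 - c\,p\cdot\mathrm{dist}(\xi,\text{periods})^2$ would translate, after tensorizing $n$ times, into exponential anti-concentration whenever $v$ itself is not close to an arithmetic progression adapted to those periods. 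One then partitions the incompressible vectors by LCD level and controls each level via Rudelson--Vershynin-style covering estimates combined with the small-ball bound above.

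The main obstacle, in my view, is the Fourier step. For $\D=\mathrm{Unif}(\{-m,\ldots,m\})$ the characteristic function is an explicit Dirichlet kernel and the required estimates are classical; for arbitrary $\D$ it is not clear a priori what the ``periods'' should be, nor that the gap $1-|\hat{\D}(\xi)|^2$ scales linearly in $p$ off the periodic set. What seems to be needed is essentially a sharp uncertainty principle: any distribution with $\|\D\|_\infty=p$ must admit a Bohr-set-like structure on which $|\hat\D|$ is close to $1$, and off that structure $|\hat\D|^2$ is bounded away from $1$ by $\Omega(p)$. Proving such a structural statement, and then threading the resulting (possibly more intricate) notion of LCD through the covering estimates, appears to be the main technical gap separating \Cref{conj_general} from \Cref{main}.
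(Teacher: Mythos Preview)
The statement you are attempting to prove is labeled \texttt{Conjecture~\ref{conj_general}} in the paper, and the paper does \emph{not} provide a proof of it; it is explicitly presented as speculation in the ``Directions for further research'' subsection. There is therefore no paper proof to compare your proposal against.

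As for the proposal itself: what you have written is a reasonable research outline, not a proof, and you correctly identify its own gap. The compressible case and the level-set/covering machinery do transfer with only cosmetic changes once one has a suitable anti-concentration input, but the entire weight of the conjecture sits in the step you flag as the ``main obstacle.'' For the uniform distribution on $\{-m,\ldots,m\}$ the paper relies on the explicit Dirichlet-kernel bound of \Cref{fourier} and \Cref{F_prop}, which gives both the near-origin Gaussian decay and the $1/(my)$ tail; for a general $\D$ with $\|\D\|_\infty=p$ there is no analogue of this, and the ``uncertainty principle'' you ask for---that $1-|\hat\D(\xi)|^2 \gtrsim p$ off some structured set---is simply false without further hypotheses (e.g.\ take $\D$ a mixture of a point mass of weight $1-p$ with something spread out: then $|\hat\D|\ge 1-2p$ everywhere). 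One would at minimum need to symmetrize and work with $|\hat\D|^2$ as the characteristic function of the symmetrized difference, but even then the structure of the near-maximal set of $|\hat\D|$ for arbitrary $\D$ is genuinely unknown at the level of precision required here.

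In short: your plan reproduces the architecture of the proof of \Cref{main}, but the conjecture is open precisely because the Fourier input you need has not been established in this generality, and your sketch does not supply it.
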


One can even speculate a more general conjecture, where each entry comes from a different underlying distribution, as long as they all have bounded max-probability.

\paragraph{Applications to coding theory.}
We view \Cref{mds_integer} as a first step towards the study of codes over integers. There are many intriguing questions that arise in coding theory, once we showed that random integer matrices are MDS with high probability. 
\begin{itemize}
    \item \textbf{Explicit constructions.} A natural question is to give an explicit construction of MDS matrices over integers with small integer values. Concretely, when $k=\alpha n$ for some constant $\alpha \in (0,1)$, to give an explicit construction of a $k \times n$ MDS matrix with a constant alphabet size (namely, independent of $n$).

    \item\textbf{Algorithms.} The next natural question, once there are explicit constructions, is to design efficient decoding algorithms for such codes. In particular, it would be intriguing to see if the smaller alphabet size can be utilized to obtain improved runtime (even by logarithmic factors).
\end{itemize}

\paragraph{Paper Outline.}
We prove \Cref{main} in the remainder of the paper. We start with a high-level overview of our framework in \Cref{sec:overview}.
We compute some preliminary estimates in \Cref{sec:estimates}, define and study incompressible vectors in \Cref{sec:compressible}, define the LCD condition in \Cref{sec:LCD}, where we also prove some properties of it, and bound the LCD of random vectors in \Cref{sec:LCD_bound}. We put all the ingredients together and complete the proof in \Cref{sec:proof_complete}.

\section{General approach}
\label{sec:overview}

We will follow the general approach of Rudelson \cite{rudelson2013lecture} with several modifications needed to obtain effective bounds for large $m$.

\paragraph{Notation.}
It will be convenient to scale the entries to be in $[-1,1]$; we denote by $\D$ the uniform distribution over $\{a/m: a \in \{-m,\ldots,m\}\}$. We denote by $\D^n$ the distribution over $n$-dimensional vectors with independent entries from $\D$, and by $\D^{n \times n}$ the distribution over $n \times n$ matrices with independent entries from $\D$.
We denote by $S^{n-1}$ the unit sphere in $\R^n$, namely $S^{n-1} = \{x \in \R^n: \|x\|_2=1\}$.
We will use the $c,c',c_0$, etc, to denote unspecified positive constants. Note that the same letter (e.g. $c$) can mean different unspecified constants in different lemmas.

\paragraph{We may assume that $n,m$ are large enough.}
We will assume throughout the proof that $n,m$ are large enough; concretely, for any absolute constants $n_0, m_0$, we may assume that $n \ge n_0, m \ge m_0$, and this would only effect the value of the constant $c$ in \Cref{main}.

To see why, consider first the regime of constant $m$ and large $n$. The distribution $\D$ is symmetric and bounded in $[-1,1]$. The results of \cite{rudelson2008littlewood} show that in such a case,
$$
\Pr[M \text{ is singular}] \le c^n
$$
for some absolute constant $c \in (0,1)$. This proves \Cref{main} for any constant $m$.

The other regime is that of constant $n$ and large $m$. While we may appeal to the result of Katznelson \cite{katznelson1993singular} in this regime, which gives a sharp bound of $c_n m^{-n}$, there is a much simpler argument that gives a bound of the order of $1/m$ which is good enough to establish \Cref{main} in this regime.
As the determinant of an $n \times n$ matrix is a polynomial of degree $n$, the Schwartz-Zippel lemma \cite{schwartz1980fast,zippel1979probabilistic} gives 
$$
\Pr[M \text{ is singular}] = \Pr[\det(M)=0] \le \frac{n}{m}.
$$
In particular, for constant $n$ and large $m$, this probability is of the order of $1/m$, which is consistent with the claimed bound of \Cref{main} (taking $c < 1/n$).

\paragraph{General approach.}
Let $M \sim \D^{n \times n}$, and let $X_1,\ldots,X_n$ denote its rows. If $M$ is singular, then one of the rows belongs to the span of the other rows. By symmetry we have
$$
\Pr[M \text{ is singular}] \le n \cdot \Pr[X_n \in \text{Span}(X_1,\ldots,X_{n-1})].
$$

Let $X^*$ be any unit vector orthogonal to $X_1,\ldots, X_{n-1}$ (if there are multiple ones, choose one in some deterministic way). We call it a \textit{random normal vector}. We will shorthand $X=X_n$. Observe that $X,X^*$ are independent. Thus we can bound
$$
\Pr[M \text{ is singular}] \le n \cdot \Pr[\ip{X^*, X}=0].
$$
To do so, we will show that unless $X^*$ belongs to a set of ``bad'' vectors, then the above probability is at most $m^{-cn}$, and that the probability that $X^*$ is bad is also at most $m^{-cn}$.

\section{Preliminary estimates}
\label{sec:estimates}

We establish some preliminary estimates in this section, which will be needed later in the proof.

\paragraph{Maximal eigenvalues of random matrices.}

The first ingredient is bounding the spectral norm of $M$. In fact, we would need this bound for rectangular matrices.
Given an $n \times k$ matrix $R$ we denote its spectral norm as $\|R\| = \max \{\|R x\|_2: x \in S^{k-1}\}$. Note that $\|R\|=\|R^T\|$ since
$\|R\| = \max_{x \in S^{k-1},y \in S^{n-1}} y^T R x$.

The following claim is a special case of \cite[proposition 4.4]{rudelson2013lecture}, who showed that it holds for any symmetric distribution $\D$ supported in $[-1,1]$.

\begin{claim}
\label{largest}
Let $R \sim \D^{n \times k}$ for $n \ge k$. Then for any $\lambda>0$,
$$
\Pr[\|R\| \ge \lambda \sqrt{n}] \le 2^{-c \lambda^2 k}.
$$
\end{claim}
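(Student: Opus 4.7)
The plan is to combine a standard $\eps$-net argument on $S^{k-1}$ with point-wise sub-Gaussian concentration for linear combinations of the entries of $R$, following Rudelson \cite{rudelson2013lecture}. First I would pick a $(1/2)$-net $\N \subset S^{k-1}$ with $|\N| \le 5^k$, using the usual volumetric bound. A routine approximation argument then yields $\|R\| \le 2\max_{x \in \N} \|Rx\|_2$, so it suffices to control $\|Rx\|_2$ for each fixed $x \in S^{k-1}$ and then union-bound over $\N$.

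For a fixed $x \in S^{k-1}$, each coordinate $(Rx)_i = \sum_j R_{ij} x_j$ is a sum of independent, mean-zero random variables (since $\D$ is symmetric about $0$), each bounded in $[-|x_j|,|x_j|]$. Hoeffding's inequality gives that $(Rx)_i$ is sub-Gaussian with variance proxy at most $\sum_j x_j^2 = 1$, so $\|Rx\|_2^2 = \sum_i (Rx)_i^2$ is a sum of $n$ independent sub-exponential random variables whose sub-exponential norms are bounded by an absolute constant. Applying Bernstein's inequality to the deviation of $\|Rx\|_2^2$ from its mean (which is $\Theta(n)$) then yields, for all $\lambda$ above some absolute threshold $\lambda_0$,
$$
\Pr\!\left[\|R x\|_2 \ge \tfrac{\lambda}{2}\sqrt{n}\right] \le \exp(-c_0 \lambda^2 n),
$$
with $c_0 > 0$ an absolute constant.

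A union bound over $\N$ then gives $\Pr[\|R\| \ge \lambda \sqrt{n}] \le 5^k \exp(-c_0 \lambda^2 n)$. Since $n \ge k$, once $c_0 \lambda^2 \ge 2\log 5$ the net entropy $k \log 5$ is absorbed by half of the tail exponent, producing $\exp(-(c_0/2) \lambda^2 n) \le 2^{-c \lambda^2 k}$ for $c = c_0/(2\ln 2)$. For $\lambda$ below that threshold the claim is handled by shrinking $c$ enough to make the right-hand side close to $1$. The main obstacle is essentially a bookkeeping one: making sure the net entropy $k \log 5$ is dominated by the sub-Gaussian tail, which goes through precisely because $n \ge k$. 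Notably, no anti-concentration input on $\D$ is needed here --- only the boundedness and symmetry of $\D$ enter the argument, matching the generality claimed in \cite{rudelson2013lecture}.
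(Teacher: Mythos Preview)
The paper does not actually prove this claim: it simply cites it as a special case of \cite[Proposition~4.4]{rudelson2013lecture}, valid for any symmetric distribution supported in $[-1,1]$. Your sketch is precisely the standard $\eps$-net plus sub-Gaussian tail argument that underlies that proposition (one can also net both $S^{k-1}$ and $S^{n-1}$ and apply Hoeffding directly to $y^{T}Rx$, sidestepping the Bernstein step, but your one-sided version with Bernstein on $\|Rx\|_2^2$ is equally standard). So you have written out what the paper leaves as a citation, and the approaches coincide.

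One small point does not quite work as written. Your last sentence, that for $\lambda$ below the threshold $\lambda_0$ one can ``shrink $c$ enough to make the right-hand side close to $1$,'' fails: for any fixed $c>0$, if $\lambda>0$ is small and $k\to\infty$ then $2^{-c\lambda^2 k}\to 0$, while $\Pr[\|R\|\ge\lambda\sqrt{n}]\to 1$ (since $\|R\|$ is typically of order $\sqrt{n}$). This is really a defect in the statement as quoted --- it should read ``for all $\lambda\ge\lambda_0$'' for some absolute $\lambda_0$ --- and not in your argument. Every invocation of \Cref{largest} in the paper has $\lambda\ge\sqrt{\log 2}$, so the issue is harmless in context; just be aware that the small-$\lambda$ range cannot be rescued by adjusting $c$.
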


\paragraph{Anti-concentration of projections.}

Next, we need anti-concentration results for projections of $\D^n$. To begin with we consider projections of the uniform distribution over the solid cube $[-1,1]^n$.

\begin{claim}\label{anticonc-uniform}
Let $U \sim [-1,1]^n$ be uniformly distributed. Then for every $x \in S^{n-1}$ and $\eps>0$,
$$
\Pr_u\left[|\ip{U,x}| \le \eps \right] \le  c \eps.
$$
\end{claim}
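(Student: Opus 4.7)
The plan is to interpret the probability as a normalized volume of a slab inside the cube $[-1,1]^n$, reduce to the maximum hyperplane slice using symmetry and log-concavity, and then invoke Keith Ball's cube-slicing inequality to obtain a bound that is uniform in $n$ and in $x$.

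First, since $x\in S^{n-1}$ has $|\nabla \ip{u,x}|=1$, the coarea formula gives
$$
\Pr_U\bigl[|\ip{U,x}|\le\eps\bigr]\;=\;2^{-n}\int_{-\eps}^{\eps}V(y)\,dy,\qquad V(y)\;:=\;\mathrm{Vol}_{n-1}\bigl(\{u\in[-1,1]^n:\ip{u,x}=y\}\bigr).
$$
The slice function $V$ is symmetric in $y$, because the involution $u\mapsto-u$ preserves $[-1,1]^n$ and negates $\ip{u,x}$. It is also log-concave in $y$, since it is a marginal of the indicator of the convex body $[-1,1]^n$ (a standard consequence of Pr\'ekopa--Leindler). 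A symmetric log-concave function on $\R$ attains its maximum at $0$, so $V(y)\le V(0)$, and
$$
\Pr_U\bigl[|\ip{U,x}|\le\eps\bigr]\;\le\;2^{-n}\cdot 2\eps\cdot V(0).
$$

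Next I would apply Ball's cube-slicing inequality: for every unit vector $x\in\R^n$, $\mathrm{Vol}_{n-1}\bigl(x^\perp\cap[-\tfrac12,\tfrac12]^n\bigr)\le\sqrt{2}$. Rescaling $[-\tfrac12,\tfrac12]^n$ by a factor of $2$ gives $V(0)\le\sqrt{2}\cdot 2^{n-1}$. Substituting yields
$$
\Pr_U\bigl[|\ip{U,x}|\le\eps\bigr]\;\le\;\sqrt{2}\,\eps,
$$
so the claim holds with $c=\sqrt{2}$.

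The main obstacle is producing a dimension-free density/slice bound. A naive convolution estimate gives only $\|f_Y\|_\infty\le 1/(2\|x\|_\infty)\le\sqrt{n}/2$, which is far too weak: the bound does degrade with $n$ if one works coordinate by coordinate. Ball's theorem supplies the sharp, $n$-independent constant in one stroke. A self-contained alternative would be the Fourier route $\|f_Y\|_\infty\le(2\pi)^{-1}\int\prod_i|\mathrm{sinc}(x_it)|\,dt$, bounding the integrand by splitting $t$ according to whether $|x_it|\lesssim 1$ (using $|\mathrm{sinc}(u)|\le e^{-cu^2}$) or $|x_it|\gtrsim 1$ (using $|\mathrm{sinc}(u)|\le 1/|u|$) and exploiting $\sum x_i^2=1$; this recovers the same qualitative bound with a larger, non-sharp constant.
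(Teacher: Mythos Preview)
Your proof is correct and takes a genuinely different route from the paper. Both arguments share the first step: the density of $S=\ip{U,x}$ is log-concave by Pr\'ekopa--Leindler. From there, the paper invokes Carbery--Wright anti-concentration for log-concave measures, which immediately yields $\Pr[|S|\le\eps]\le c\eps$ with an unspecified constant $c$ (after normalizing by $\E|S|$ or $\mathrm{Var}(S)^{1/2}$, which is bounded away from zero since $\mathrm{Var}(S)=1/3$). You instead bound the density of $S$ at its mode directly, reducing to the maximal central hyperplane section of the cube and applying Ball's cube-slicing theorem. This buys you the explicit, sharp constant $c=\sqrt 2$, at the cost of being specific to the cube; the paper's argument works verbatim for any isotropic log-concave body but with an implicit constant. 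Your Fourier alternative at the end is essentially how Ball himself proves the slicing inequality, and it is also close in spirit to the paper's later \Cref{firstint}, so that self-contained route would mesh well with the rest of the paper.
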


\begin{proof}
The uniform distribution $U \sim [-1,1]^n$ is a log-concave distribution. Let $S=\ip{U,x}$ and note that $S$ is a projection of $U$ along the direction $x$. The Prékopa–Leindler inequality \cite{leindler1972certain, prekopa1973logarithmic} states that projections of log-concave distributions are log-concave, and so $S$ is a log-concave distribution. Carbery and Wright \cite[Theorem 8]{carbery2001distributional} show that the required anti-concentration bound holds for any log-concave distribution.
\end{proof}

We extend this anti-concentration to the discrete case using a coupling argument. Here and throughout, we denote by $\log(\cdot)$ logarithm in base $2$.

\begin{claim}\label{anticonc}
Let $X \sim \D^n$ and set $\eps_0 = \frac{\sqrt{\log m}}{m}$. Then for every $x \in S^{n-1}$ and $\eps \ge \eps_0$, 
$$
\Pr\left[|\ip{X,x}| \le \eps\right] \le c \eps.
$$
\end{claim}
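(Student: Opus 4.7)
The plan is to discretize the previous claim by a coupling argument, writing $X$ as a rescaled uniform vector minus a small noise vector. Concretely, let $V \sim [-1/(2m), 1/(2m)]^n$ be independent of $X$, and set $U = X + V$. A direct check shows that each coordinate $U_i = X_i + V_i$ has a uniform density on $[-1 - 1/(2m), 1 + 1/(2m)]$, since the $2m+1$ disjoint length-$1/m$ intervals centered at the grid points cover this range with equal weight. Therefore $\tilde U := \frac{2m}{2m+1} U$ is distributed as the uniform measure on $[-1,1]^n$, and Claim~\ref{anticonc-uniform} applies to $\tilde U$.

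Next I would decompose the event by a union bound over whether the noise is small:
$$
\Pr\bigl[|\ip{X,x}| \le \eps\bigr] \le \Pr\bigl[|\ip{X,x}| \le \eps,\ |\ip{V,x}| \le \eps\bigr] + \Pr\bigl[|\ip{V,x}| > \eps\bigr].
$$
For the first term, $|\ip{X,x}| \le \eps$ together with $|\ip{V,x}| \le \eps$ gives $|\ip{X+V,x}| \le 2\eps$, hence $|\ip{\tilde U, x}| = \tfrac{2m}{2m+1} |\ip{X+V, x}| \le 2\eps$, and Claim~\ref{anticonc-uniform} bounds its probability by $O(\eps)$.

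The remaining task is to show that $\Pr[|\ip{V,x}| > \eps] \le c\eps$ whenever $\eps \ge \eps_0 = \sqrt{\log m}/m$. Since $V_i \in [-1/(2m), 1/(2m)]$ are independent, mean zero, and $\|x\|_2 = 1$, Hoeffding's inequality yields
$$
\Pr\bigl[|\ip{V,x}| > \eps\bigr] \le 2 \exp\!\bigl(-2 \eps^2 m^2\bigr).
$$
The threshold $\eps_0 = \sqrt{\log m}/m$ is exactly tuned so that $2 \exp(-2 \eps^2 m^2) \le 2/m^2$ when $\eps \ge \eps_0$, which is in turn dominated by $c \eps$ once $m$ is larger than an absolute constant (using $\eps \ge \sqrt{\log m}/m$, the inequality $2/m^2 \le c \eps$ reduces to $m\sqrt{\log m} \ge 2/c$). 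Combining the two pieces gives the desired $O(\eps)$ bound, and the remaining regime of bounded $m$ is absorbed into the constant $c$ as per the convention set in Section~\ref{sec:overview}.

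The main conceptual step is the coupling $X = \tfrac{2m+1}{2m}\tilde U - V$, which reduces everything to the continuous case; the main calibration is the choice of $\eps_0$, whose role is precisely to make the Hoeffding tail for $\ip{V,x}$ small enough to be absorbed in the target bound $c\eps$. I do not anticipate any serious obstacle, since all ingredients (the continuous anti-concentration, Hoeffding, and the linearity of the decomposition) are standard; the only mild subtlety is ensuring the rescaling factor $(2m+1)/(2m)$ is handled in the direction that shrinks rather than inflates $|\ip{\tilde U, x}|$.
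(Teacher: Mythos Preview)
Your proposal is correct and follows essentially the same coupling argument as the paper: both write $X$ as a uniform cube vector minus a small independent noise, bound the main term via \Cref{anticonc-uniform}, and control the noise tail by a Chernoff/Hoeffding bound calibrated to $\eps_0$. The only cosmetic difference is that you rescale to hit $[-1,1]^n$ exactly, whereas the paper applies \Cref{anticonc-uniform} directly to the slightly larger cube $[-1-1/2m,1+1/2m]^n$.
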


\begin{proof}
We apply a coupling argument between the uniform distribution in $[-1,1]^n$ and $\D^n$. Sample $X \sim \D^n$, $Y \sim [-1,1]^n$ and set
$Z = X + Y / 2m$. Observe that $Z$ is uniform in the solid cube $[-1-1/2m, 1+1/2m]^n$.
Next, fix $\eps>0$ and observe that $\ip{X,x} = \ip{Z,x} - \ip{Y,x} / 2m$. Thus we can bound
$$
\Pr[|\ip{X,x}| \le \eps] \le \Pr[|\ip{Z,x}| \le 2 \eps] + \Pr[|\ip{Y,x}| \ge 2 \eps m].
$$
For the first term, \Cref{anticonc-uniform} bounds its probability by $c_1 \eps$. For the second term, the Chernoff bound bounds its probability for $\eps \ge \eps_0$ by $1/m$. As we have $1/m \le \eps$, the claim follows.
\end{proof}

\paragraph{Tensorization lemma.}

We also need the following ``tensorization lemma'' \cite[Lemma 6.5]{rudelson2013lecture}.

\begin{claim}
\label{tensorization}
Let $Y_1,\ldots,Y_n$ be independent real-valued random variables. Assume for some $K, \eps_0>0$ that
$$
\Pr[|Y_i| \le \eps] \le K \eps \qquad \text{ for all } \eps \ge \eps_0.
$$
Then
$$
\Pr\left[ \sum_{i=1}^n Y_i^2 \le \eps^2 n\right] \le (c K \eps)^n \qquad \text{ for all } \eps \ge \eps_0.
$$
\end{claim}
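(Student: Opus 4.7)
The plan is to use the classical Laplace-transform trick together with independence. I set $t = 1/\eps^2$ and study the moment-generating function
$$
\E\!\left[\exp\!\left(-\sum_{i=1}^n Y_i^2/\eps^2\right)\right] = \prod_{i=1}^n \E\!\left[\exp(-Y_i^2/\eps^2)\right],
$$
where the factorization uses independence. By Markov's inequality applied to the event $\sum_i Y_i^2 \le \eps^2 n$ (equivalently $\exp(-\sum_i Y_i^2/\eps^2) \ge e^{-n}$), we get
$$
\Pr\!\left[\sum_{i=1}^n Y_i^2 \le \eps^2 n\right] \le e^n \prod_{i=1}^n \E\!\left[\exp(-Y_i^2/\eps^2)\right].
$$
So it suffices to prove the single-variable estimate $\E[\exp(-Y_i^2/\eps^2)] \le c' K \eps$ for an absolute constant $c'$ and every $\eps \ge \eps_0$; the final bound $(cK\eps)^n$ with $c = e c'$ follows immediately.

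For the single-variable bound I would use the layer-cake formula. Writing $F(u) = \Pr[|Y_i| < u]$, integration by parts (or the standard ``$\E[g(Y)] = \int g'$ times tail'' identity applied to $g(y)=1-\exp(-y^2/\eps^2)$) gives
$$
\E\!\left[\exp(-Y_i^2/\eps^2)\right] = \int_0^\infty F(u)\cdot \frac{2u}{\eps^2}\, e^{-u^2/\eps^2}\, du.
$$
I split this integral at $u = \eps_0$. In the tail $u \ge \eps_0$ the hypothesis gives $F(u) \le Ku$, and after the substitution $v = u^2/\eps^2$ the integrand becomes $K\eps \sqrt{v}\, e^{-v}$, yielding a contribution at most $K\eps \int_0^\infty \sqrt v\, e^{-v} dv = O(K\eps)$. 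In the head region $u \le \eps_0$ the tail hypothesis does not apply directly, but monotonicity gives the crude bound $F(u) \le F(\eps_0) \le K\eps_0$; that piece contributes at most $K\eps_0 \cdot (\eps_0^2/\eps^2) = K\eps_0^3/\eps^2$, and since $\eps \ge \eps_0$ this is at most $K\eps$.

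Adding the two pieces gives $\E[\exp(-Y_i^2/\eps^2)] \le c' K \eps$, which combined with the Markov step above yields the claim. The only mildly subtle point is the head region $u \le \eps_0$, where the anticoncentration hypothesis fails — but the trivial bound $F(u) \le K\eps_0$ is enough because the Gaussian weight $e^{-u^2/\eps^2}$ confines $u$ to scale $\eps \ge \eps_0$, so this region contributes a negligible amount. Everything else is routine calculus.
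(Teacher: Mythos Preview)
The paper does not supply its own proof of this claim; it is simply quoted as \cite[Lemma 6.5]{rudelson2013lecture}. Your argument is correct and is essentially the standard proof given there: apply Markov to $\exp(-\sum Y_i^2/\eps^2)$, factor by independence, and bound each single-variable Laplace transform via the layer-cake identity, splitting the integral at $u=\eps_0$ exactly as you do.
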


\paragraph{Nets.}
A set of unit vectors $\N \subset S^{n-1}$ is called an $\eps$-net, for $\eps>0$, if it satisfies:
$$
\forall x \in S^{n-1} \; \exists y \in \N \; \|x-y\|_2 \le \eps.
$$ 
The following claim bounds the size of such a net. For a proof see for example \cite[Lemma 2.6]{milman2009asymptotic}.

\begin{claim}\label{net}
For any $\eps>0$, there exists a $\eps$-net $\N \subset S^{n-1}$ of size $|\N| \le (3/\eps)^n$.
\end{claim}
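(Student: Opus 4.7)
The plan is the standard volumetric covering argument. I would start by taking $\N \subset S^{n-1}$ to be any maximal $\eps$-separated subset, i.e., a set such that $\|y-y'\|_2 > \eps$ for all distinct $y,y' \in \N$ and such that no further point of $S^{n-1}$ can be added while preserving this separation. Maximality directly gives that $\N$ is an $\eps$-net: if some $x \in S^{n-1}$ satisfied $\|x-y\|_2 > \eps$ for all $y \in \N$, then $\N \cup \{x\}$ would be a strictly larger $\eps$-separated set, contradicting maximality. So producing the net is free; the content lies in bounding $|\N|$.

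For the size bound, I would use a volume-packing argument in $\R^n$. Place open Euclidean balls of radius $\eps/2$ around each point of $\N$. By the $\eps$-separation these balls are pairwise disjoint, and they are all contained in the slightly enlarged ball of radius $1+\eps/2$ around the origin, since each center lies on $S^{n-1}$. Comparing $n$-dimensional Lebesgue volumes and using the fact that the volume of a Euclidean ball of radius $r$ in $\R^n$ scales as $r^n$ times a common constant, I get
\[
|\N| \cdot (\eps/2)^n \;\le\; (1+\eps/2)^n,
\]
so that $|\N| \le (1 + 2/\eps)^n$.

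Finally, I would verify that $(1+2/\eps)^n \le (3/\eps)^n$ in the relevant range. For $0 < \eps \le 1$ this is immediate since $1 + 2/\eps \le 3/\eps$. For $\eps > 1$ the claim is trivial: taking any single point (or, if $\eps<2$, a constant-size net) already works, and the bound $(3/\eps)^n \ge 1$ is not restrictive. The one small bookkeeping point to be careful about is the case $\eps \ge 2$, where the sphere has diameter $\le \eps$ and a singleton net suffices; no real obstacle arises here, so the only substantive step is the volume comparison above.
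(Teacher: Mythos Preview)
Your argument is correct and is exactly the standard volumetric packing proof. The paper itself does not give a proof of this claim; it simply cites \cite[Lemma 2.6]{milman2009asymptotic}, where precisely this argument appears. So there is nothing to compare against beyond noting that you have reproduced the standard proof.

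One minor bookkeeping remark: your final paragraph asserts that ``the bound $(3/\eps)^n \ge 1$ is not restrictive'' for $\eps>1$, but in fact for $\eps>3$ the stated bound drops below $1$ and the claim as literally written becomes false (any net must contain at least one point). This is a defect of the claim's statement rather than of your proof, and it is irrelevant in the paper since the claim is only ever invoked with small $\eps$ (for instance $\eps^2$ with $\eps\le 1$ in the proof of \Cref{smallsing}). Your clean treatment of the regime $\eps\le 1$ is all that is actually needed.
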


\paragraph{Integer points in ball.}

We need a bound on the number of integer vectors in a ball of a given radius. Let $B_n(r) = \{x \in \R^n: \|x\|_2 \le r\}$ denote the ball of radius $r$ in $\R^n$. The following bound is well known.

\begin{claim}\label{int_ball}
The number of integer vectors in $B_n(r)$ is at most $\left(1 + \frac{c r}{\sqrt{n}}\right)^n$.
\end{claim}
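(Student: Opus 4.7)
I would prove \Cref{int_ball} by a standard volume-packing argument, in three steps.

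First, I would associate to each integer vector $v \in \Z^n \cap B_n(r)$ the axis-aligned unit cube $Q_v := v + [-1/2, 1/2]^n$. These cubes are disjoint, since distinct integer points differ by at least $1$ in some coordinate; each has volume $1$; and each is contained in the slightly larger ball $B_n(r + \sqrt{n}/2)$, because $Q_v$ has Euclidean diameter $\sqrt{n}$ and is centered at a point of norm at most $r$. Summing volumes yields the packing bound
\[
|\Z^n \cap B_n(r)| \;\le\; \text{Vol}\!\left(B_n(r + \sqrt{n}/2)\right).
\]

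Second, I would plug in the standard ball-volume formula $\text{Vol}(B_n(R)) = \pi^{n/2} R^n / \Gamma(n/2+1)$ and use Stirling's inequality $\Gamma(n/2+1) \ge (n/(2e))^{n/2}$ to obtain
\[
\text{Vol}(B_n(R)) \;\le\; \left(R\sqrt{2\pi e/n}\right)^n.
\]
Substituting $R = r + \sqrt{n}/2$ gives a bound of the form $\bigl(\sqrt{2\pi e}\,(r/\sqrt{n} + 1/2)\bigr)^n$.

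Third, I would reshape this into the advertised form $(1 + cr/\sqrt{n})^n$. In the main regime $r \ge \sqrt{n}$ this is immediate: there $r/\sqrt{n} + 1/2 \le \tfrac{3}{2}\cdot r/\sqrt{n}$, so the bound becomes $(c_1 r/\sqrt{n})^n \le (1 + c_1 r/\sqrt{n})^n$ with $c_1 = \tfrac{3\sqrt{2\pi e}}{2}$. The regime $r < \sqrt{n}$ needs a separate estimate, since the volume bound above carries an additive constant $\sqrt{2\pi e}/2 > 1$ that does not collapse into a linear expression of the advertised shape. Here I would partition integer vectors by support size: any $v \in \Z^n \cap B_n(r)$ has at most $k \le r^2$ nonzero coordinates (each contributing $\ge 1$ to $\|v\|_2^2$), so
\[
|\Z^n \cap B_n(r)| \;\le\; \sum_{k=0}^{\lfloor r^2\rfloor}\binom{n}{k}\,|\Z^k \cap B_k(r)|,
\]
and for each $k \le r^2$ one has $r \ge \sqrt{k}$, so the earlier volume bound applies to each inner term and yields $|\Z^k \cap B_k(r)| \le (c_0 r/\sqrt{k})^k$. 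After simplifying each summand via $\binom{n}{k} \le (en/k)^k$ and optimizing, the sum is absorbed into $(1 + cr/\sqrt{n})^n$ for a sufficiently large absolute $c$; for $r < 1$, only the origin lies in $B_n(r)$, so the bound is trivial.

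The main obstacle is exactly this last step — the volume bound alone is off by a multiplicative constant to the $n$-th power when $r \ll \sqrt{n}$, so one needs the separate combinatorial argument to handle the small-$r$ regime. Once this absorption is carried out, the claim holds with an absolute constant $c$.
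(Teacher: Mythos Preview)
The paper does not actually prove \Cref{int_ball}; it simply records the bound as ``well known'' and moves on. So there is no proof in the paper to compare against, and your argument must be judged on its own merits.

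Your volume-packing argument (Steps 1--2) is the standard one and is correct: it yields
\[
|\Z^n \cap B_n(r)| \;\le\; \bigl(\sqrt{\pi e/2} + \sqrt{2\pi e}\, r/\sqrt{n}\bigr)^n,
\]
which immediately gives the claimed form $(1 + cr/\sqrt{n})^n$ whenever $r/\sqrt{n}$ is bounded below by a positive constant. This already covers the paper's only use of the claim: in \Cref{SDnet} it is applied with $r = 4D \ge 4\sqrt{\alpha n}$, so $r/\sqrt{n} \ge 4\sqrt{\alpha}$ and the constant $\sqrt{\pi e/2}\approx 2.07$ can be absorbed into the $cr/\sqrt{n}$ term.

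Your Step 3 for the regime $r < \sqrt{n}$ is the right idea but is more than a one-line ``simplify and optimize.'' After bounding $\binom{n}{k}(c_0 r/\sqrt{k})^k \le (c_0 e\, nr/k^{3/2})^k$, the maximal summand occurs at $k=\lfloor r^2\rfloor$ and equals roughly $(c_0 e\, n/r^2)^{r^2}$; comparing this to $(1+cr/\sqrt{n})^n$ reduces (after taking $n$-th roots and setting $t=r/\sqrt{n}\in(0,1]$) to showing $(2c_0 e/t^2)^{t^2} \le 1 + ct$ for some absolute $c$. This inequality does hold --- one checks that the left side equals $1$ at $t=0$ with vanishing derivative, is increasing on $(0,1]$, and satisfies a bound of the form $1 + O(t)$ uniformly --- but it is a genuine calculation, not a triviality. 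With that verification in hand, your proof is complete.
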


\section{Compressible vectors}
\label{sec:compressible}

The first set of ``bad'' vectors that we want to rule out are vectors which are close to sparse. A vector $u \in \R^n$ is $k$-sparse if it has at most $k$ nonzero coordinates.

\begin{definition}[Compressible vectors]\label{compress_defn}
Let $\alpha,\beta \in (0,1)$. A unit vector $x \in S^{n-1}$ is called $(\alpha,\beta)$-\emph{compressible} if it can be expressed as $x=u+v$, where $u$ is $(\alpha n)$-sparse and $\|v\|_2 \le \beta$. Otherwise, we say that $x$ is $(\alpha,\beta)$-\emph{incompressible}.
\end{definition}

We will later choose $\alpha,\beta$, but we note here that $\alpha$ will be a small enough absolute constant and $\beta$ a small polynomial
in $1/m$. Concrete values that work are $\alpha=1/50, \beta = 1/\sqrt{m}$. We will implicitly assume that both $n,m$ are large enough; concretely, at various places we assume that $\alpha n \ge 2$.

The main lemma we prove in this section is the following.

\begin{lemma}\label{compress}
Let $\alpha \in (0,1/8),\beta \in (\eps_0,1/2)$ where $\eps_0 = \frac{\sqrt{\log m}}{m}$. Then
$$
\Pr\left[X^* \text{ is } (\alpha,\beta)\text{-compressible}\right] \le (c \beta)^{n/8}.
$$
\end{lemma}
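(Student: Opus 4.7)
The plan is to use the standard net-plus-anti-concentration framework. Let $A$ denote the $(n-1)\times n$ random matrix whose rows are $X_1,\ldots,X_{n-1}$, so that $AX^*=0$ by construction. I will build a small net $\N\subset S^{n-1}$ for the set of $(\alpha,\beta)$-compressible vectors and, for each $y\in\N$, bound the probability that $\|Ay\|_2$ is small, which is forced whenever $X^*$ falls close to $y$.

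\emph{Construction of the net.} If $x$ is an $(\alpha,\beta)$-compressible unit vector with decomposition $x=u+v$, then $\|u\|_2\in[1-\beta,1+\beta]$, so $u/\|u\|_2$ is an $\alpha n$-sparse unit vector within distance $2\beta$ of $x$. Covering the unit sphere of each of the $\binom{n}{\alpha n}$ coordinate subspaces of dimension $\alpha n$ by a $\beta$-net via Claim \ref{net} yields a $3\beta$-net $\N$ for the compressible vectors with
$|\N|\le\binom{n}{\alpha n}(3/\beta)^{\alpha n}\le\left(\tfrac{3e}{\alpha\beta}\right)^{\alpha n}$.

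\emph{Per-point estimate and union bound.} Fix a parameter $\lambda\ge 1$. By Claim \ref{largest} applied to $A^T$, $\Pr[\|A\|>\lambda\sqrt{n}]\le 2^{-c\lambda^2(n-1)}$. On the complementary event, if $X^*$ is compressible and $y\in\N$ satisfies $\|X^*-y\|_2\le 3\beta$, then $\|Ay\|_2=\|A(y-X^*)\|_2\le 3\lambda\beta\sqrt{n}$. For any fixed $y\in S^{n-1}$, the scalars $Y_i=\ip{X_i,y}$ are independent and by Claim \ref{anticonc} each satisfies $\Pr[|Y_i|\le\eps]\le c\eps$ for $\eps\ge\eps_0$; then Claim \ref{tensorization} gives $\Pr[\|Ay\|_2\le 3\lambda\beta\sqrt{n}]\le (C\lambda\beta)^{n-1}$, valid because $\lambda\beta\ge\beta\ge\eps_0$. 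Summing over $y\in\N$,
\[
\Pr[X^*\text{ is }(\alpha,\beta)\text{-compressible}]\le 2^{-c\lambda^2(n-1)}+\left(\tfrac{3e}{\alpha\beta}\right)^{\alpha n}(C\lambda\beta)^{n-1}.
\]

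\emph{Choice of $\lambda$ -- the main obstacle.} A universal constant $\lambda$ leaves the spectral-norm tail $2^{-c\lambda^2 n}$ as a fixed constant-to-the-$n$, which exceeds the target $(c\beta)^{n/8}$ whenever $\beta$ is close to $\eps_0=\sqrt{\log m}/m$ for large $m$. I therefore take $\lambda\asymp\sqrt{\log(1/\beta)}$: the spectral tail becomes $\beta^{\Omega(n)}$, which is $\le(c\beta)^{n/8}$ for a suitable absolute constant $c$. The union-bound term becomes $C'^n(\log(1/\beta))^{n/2}\beta^{(1-\alpha)n}$, and since $\alpha<1/8$ the excess $\beta$-exponent over $n/8$ is at least $(3/4)n$; after raising to the $1/n$ the remaining scalar $(\log(1/\beta))^{4}\beta^{8(7/8-\alpha)}$ is uniformly bounded on $(0,1/2)$, which absorbs the polylogarithmic loss coming from $\lambda$ and yields the claimed $(c\beta)^{n/8}$ bound.
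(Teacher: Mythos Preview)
Your argument is correct and follows essentially the same strategy as the paper: pass from a compressible $X^*$ to a nearby sparse unit vector, control $\|M'\|$ with $\lambda\asymp\sqrt{\log(1/\beta)}$ so that the spectral-norm tail is $\beta^{\Omega(n)}$, and then bound the probability that $\|M'y\|_2$ is small for each candidate $y$ by anti-concentration plus a union bound. The only real difference is that the paper packages the last step through two auxiliary statements (\Cref{vectoranti} and \Cref{smallsing}) about tall rectangular submatrices, whereas you apply the tensorization lemma (\Cref{tensorization}) directly to the rows; your route is slightly cleaner and even yields a better exponent before it is rounded down to $n/8$.
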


We need a bound on the smallest singular value of a rectangular matrix. 

\begin{claim}\label{vectoranti}
Let $R \sim \D^{n \times k}$ for $n \ge k$. Then for every $x \in S^{k-1}$ and $\eps \ge \eps_0$, 
$$
\Pr\left[\|Rx\|_2 \le \eps \sqrt{n}\right] \le (c \eps)^{n/2}.
$$
\end{claim}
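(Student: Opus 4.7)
The plan is to reduce this claim directly to the anti-concentration result of \Cref{anticonc} and the tensorization lemma \Cref{tensorization}. Writing $R_1,\ldots,R_n \in \R^k$ for the rows of $R$, which are i.i.d.\ samples from $\D^k$, I first observe that
$$
\|Rx\|_2^2 = \sum_{i=1}^n \langle R_i, x\rangle^2.
$$
Setting $Y_i := \langle R_i, x\rangle$, the random variables $Y_1,\ldots,Y_n$ are i.i.d., each being a projection of a $\D^k$-vector along the unit direction $x \in S^{k-1}$.

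Next I would invoke \Cref{anticonc} (applied in dimension $k$, which is a generic statement about projections of $\D^d$ vectors along unit vectors in $S^{d-1}$). This gives, for every $\eps \ge \eps_0$,
$$
\Pr\bigl[|Y_i| \le \eps\bigr] \le c\eps.
$$
With this anti-concentration in hand for each $Y_i$, I would then apply the tensorization lemma (\Cref{tensorization}) with $K=c$ to the $n$ independent variables $Y_1,\ldots,Y_n$, obtaining
$$
\Pr\bigl[\|Rx\|_2^2 \le \eps^2 n\bigr] = \Pr\!\left[\sum_{i=1}^n Y_i^2 \le \eps^2 n\right] \le (c'\eps)^n
$$
for all $\eps \ge \eps_0$. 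Since $(c'\eps)^n \le (c'\eps)^{n/2}$ (after absorbing constants, assuming $\eps$ is below a fixed threshold, which can be handled by adjusting the constant $c$ in the claim), the desired bound $(c\eps)^{n/2}$ follows.

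In terms of difficulty, there is essentially no obstacle here: the claim is almost a one-line corollary once the anti-concentration of a single inner product (\Cref{anticonc}) and the tensorization trick (\Cref{tensorization}) are available. The only mildly delicate point is verifying that \Cref{anticonc} really applies to projections of $\D^k$ along $x \in S^{k-1}$ (which it does, as the statement is dimension-agnostic) and checking that the tensorization hypothesis holds uniformly for all $\eps \ge \eps_0$ with a single constant $K$. The weaker exponent $n/2$ in the conclusion (versus the $n$ one actually gets) presumably reflects what is needed downstream in the proof of \Cref{compress} and gives slack for later manipulations.
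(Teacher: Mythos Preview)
Your proof is correct and actually yields the stronger bound $(c'\eps)^n$ rather than $(c\eps)^{n/2}$. It is, however, a genuinely different route from the paper's. The paper argues directly: if $\|Rx\|_2^2 \le \eps^2 n$, then by a Markov-type count at least $n/2$ coordinates satisfy $|(Rx)_i| \le 2\eps$; one then union-bounds over the $\binom{n}{n/2} \le 2^n$ choices of which coordinates are small and applies \Cref{anticonc} to each, giving $2^n(c_1\eps)^{n/2} = (c\eps)^{n/2}$. Your approach instead feeds the single-coordinate bound from \Cref{anticonc} straight into the tensorization lemma \Cref{tensorization}, which packages the same idea more efficiently and recovers the full exponent $n$. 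The paper's argument is slightly more self-contained at this point in the exposition (it avoids invoking \Cref{tensorization}, which is only used later), while yours is shorter and sharper; either suffices for the downstream application in \Cref{smallsing}. Your remark about passing from $(c'\eps)^n$ to $(c'\eps)^{n/2}$ is fine once one notes that for $\eps \ge 1/c'$ the claimed bound exceeds $1$ and is vacuous.
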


\begin{proof}
Assume $\|Rx\|_2 < \eps \sqrt{n}$. This implies that $|(Rx)_i| \le 2\eps$ for at least $n/2$ coordinates $i \in [n]$.
Note that for each fixed $i$, the value $(Rx)_i$ is distributed as $\ip{X,x}$ for some $X \sim \D^k$.
Applying \Cref{anticonc} and the union bound over the choice of the $n/2$ coordinates gives
$$
\Pr\left[\|Rx\|_2 \le \eps \sqrt{n} \right] \le 2^n (c_1 \eps)^{n/2} = (c \eps)^{n/2}.
$$
\end{proof}

\begin{claim}
\label{smallsing}
Let $R \sim \D^{n \times k}$ for $n \ge 8k$. Then for every $\eps \ge \eps_0$, 
$$
\Pr\left[\min_{x \in S^{k-1}}\|Rx\|_2 \le \eps \sqrt{n} \right] \le \left(c \eps\right)^{n/4}.
$$
\end{claim}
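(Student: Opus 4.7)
The plan is a standard net argument: approximate the unit sphere $S^{k-1}$ by a finite $\eps'$-net, apply the fixed-vector bound of \Cref{vectoranti} to each net point, and control the approximation error via a bound on the operator norm $\|R\|$ coming from \Cref{largest}. The ratio $n \ge 8k$ is what makes the union bound pay off against the exponential size of the net.

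More concretely, I would first fix a constant $\lambda > 0$ to be chosen, and condition on the event $\mathcal{E} = \{\|R\| \le \lambda \sqrt{n}\}$. By \Cref{largest}, $\Pr[\mathcal{E}^c] \le 2^{-c \lambda^2 k}$, which can be made at most $(c \eps)^{n/4}$ by choosing $\lambda$ a sufficiently large absolute constant (here we use $n \ge 8k$ and that $\eps \le 1$, so $\log(1/\eps) = O(1)$ on the relevant range, or more carefully $\lambda = C \sqrt{\log(1/\eps)}$ if $\eps$ is allowed to be very small). Next, by \Cref{net}, pick an $\eps'$-net $\mathcal{N} \subset S^{k-1}$ of size $|\mathcal{N}| \le (3/\eps')^k$, where I set $\eps' = \eps / \lambda$. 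For each fixed $y \in \mathcal{N}$, \Cref{vectoranti} (applied with threshold $2\eps$) yields
$$
\Pr[\|R y\|_2 \le 2\eps \sqrt{n}] \le (c_1 \eps)^{n/2}.
$$
A union bound over $\mathcal{N}$ gives
$$
\Pr\bigl[\exists y \in \mathcal{N}: \|R y\|_2 \le 2\eps \sqrt{n}\bigr] \le (3 \lambda / \eps)^k (c_1 \eps)^{n/2}.
$$

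Now comes the approximation step. Suppose $\mathcal{E}$ holds and $\|R y\|_2 > 2\eps \sqrt{n}$ for every $y \in \mathcal{N}$. For any $x \in S^{k-1}$, pick $y \in \mathcal{N}$ with $\|x - y\|_2 \le \eps'$. Then
$$
\|R x\|_2 \ge \|R y\|_2 - \|R\| \cdot \|x - y\|_2 > 2 \eps \sqrt{n} - \lambda \sqrt{n} \cdot (\eps/\lambda) = \eps \sqrt{n},
$$
so $\min_{x \in S^{k-1}} \|R x\|_2 > \eps \sqrt{n}$. Hence
$$
\Pr\bigl[\min_{x \in S^{k-1}} \|R x\|_2 \le \eps \sqrt{n}\bigr] \le (3 \lambda / \eps)^k (c_1 \eps)^{n/2} + 2^{-c \lambda^2 k}.
$$

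The last step is to check that this quantity is at most $(c \eps)^{n/4}$. Using $n \ge 8k$, i.e., $k \le n/8$, the exponent of the net factor is absorbed: $(3\lambda/\eps)^k (c_1 \eps)^{n/2} \le (3\lambda/\eps)^{n/8} (c_1 \eps)^{n/2}$, and one checks that $(3\lambda)^{n/8} c_1^{n/2} \eps^{-n/8} \eps^{n/2} = \bigl((3\lambda)^{1/8} c_1^{1/2} \eps^{3/8}\bigr)^n$, which is comfortably bounded by $(c\eps)^{n/4}$ for a suitable absolute constant $c$. The tail term $2^{-c \lambda^2 k}$ is absorbed by the same type of estimate. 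The only mild subtlety — and the step I expect to be most finicky — is the parameter balance between $\lambda$, the net scale $\eps'$, and the factor of $2$ slack inside \Cref{vectoranti}, but nothing deeper than that is required: the condition $n \ge 8k$ gives enough slack that any reasonable choice of constants works.
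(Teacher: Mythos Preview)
Your approach is essentially the same as the paper's: net argument on $S^{k-1}$, pointwise bound from \Cref{vectoranti}, operator-norm control from \Cref{largest}, then a triangle-inequality approximation. One clarification: an absolute constant $\lambda$ does \emph{not} suffice (the aside ``$\eps\le 1$ so $\log(1/\eps)=O(1)$'' is false since $\eps$ may be as small as $\eps_0=\sqrt{\log m}/m$); you must take $\lambda=C\sqrt{\log(1/\eps)}$ as in your parenthetical, and the paper does exactly this, while choosing an $\eps^2$-net rather than an $(\eps/\lambda)$-net so that the log factor never enters the net-size bound.
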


\begin{proof}
We may assume that $\eps \le 1$ by taking $c \ge 1$.
Let $\N$ be an $(\eps^2)$-net in $S^{k-1}$ of size $|\N| \le (3/\eps^2)^k$, as given by \Cref{net}. Let $E_1$ denote the event that there exists $y \in \N$ for which $\|Ry\|_2 \le 2 \eps \sqrt{n}$. Applying \Cref{vectoranti} and a union bound gives
$$
\Pr\left[E_1\right] \le (3/\eps^2)^k \cdot (c_1 \eps)^{n/2} \le (c_2 \eps)^{n/4},
$$
where we used the assumption $n \ge 8k$.
Let $E_2$ denote the event that $\|R\| \ge \lambda \sqrt{n}$ for $\lambda=\sqrt{\log(1/\eps)}$.
\Cref{largest} shows that $\Pr[E_2] \le (c_3 \eps)^{n}$. We next show that if $E_1,E_2$ don't hold then the condition of the claim
also doesn't hold, namely that
$\|Rx\|_2 > \eps \sqrt{n}$ for all $x \in S^{k-1}$.

Let $x \in S^{k-1}$ be arbitrary and let $y \in \N$ be such that $\|x-y\|_2 \le \eps^2$. Then
$$
\|Rx\|_2 \ge \|Ry\|_2 - \|R\| \cdot \|x-y\|_2 \ge (2\eps - \eps^2 \lambda)\sqrt{n}.
$$
It can be verified that for $\eps \le 1$ we have $\eps \lambda \le 1$, which implies that $\|Rx\|_2 \ge \eps \sqrt{n}$.
\end{proof}

We will now use these two claims to prove \Cref{compress}.

\begin{proof}[Proof of \Cref{compress}]
Let $M'$ be the $(n-1) \times n$ matrix with rows $X_1,\ldots, X_{n-1}$. Assume that there exists an $(\alpha,\beta)$-compressible vector $x \in S^{n-1}$ in the kernel of $M'$. By definition, $x=u+v$ where $u$ is $(\alpha n)$-sparse and $\|v\|_2 \le \beta$. In particular, $M'(u+v)=0$ and hence $\|M' u\|_2 = \|M' v\|_2$.
In addition, $\|u\|_2 \ge \|x\|_2- \|v\|_2 \ge 1/2$ since $x$ is a unit vector and $\|v\|_2 \le \beta \le 1/2$.

Let $E$ denote the event that $\|M'\| \ge \lambda \sqrt{n}$ for $\lambda = c_1 \sqrt{\log(1/\beta)}$, where we choose $c_1 \ge 1$
large enough so that by \Cref{largest}, $\Pr[E] \le \beta^n$. Note that as we assume $\beta \le 1/2$ we have $\lambda \ge c_1 \ge 1$.
Assuming that $E$ doesn't hold, we have
$$
\|M' u\|_2 = \|M'v\|_2 \le \|M'\| \cdot \|v\|_2 \le \lambda \beta \sqrt{n}.
$$
In particular, $y=u / \|u\|_2$ is an $(\alpha n)$-sparse unit vector that satisfies $\|M' y\|_2 \le 2 \lambda \beta \sqrt{n}$. We next bound the probability that such a vector exists.

Let $\eps=2 \lambda \beta$, and note that $\eps \ge \eps_0$ since $\beta \ge \eps_0$ and $\lambda \ge 1$.
There are ${n \choose \alpha n}$ options for the support of $y$. Let $I=\{i: y_i \ne 0\}$ denote a possible support, set $k=|I|$ and let $R$ be an $(n-1) \times k$ matrix with columns $(Y_i: i \in I)$. As $\alpha < 1/8$ we have $n-1 \ge 8k$.
Thus we can apply \Cref{smallsing} and obtain that
$$
\Pr\left[\neg E \quad \wedge \quad \exists y \in S^{k-1}, \; \|Ry\|_2 \le \eps \sqrt{n}\right]
\le \left(c_2 \eps\right)^{n/4} = \left(c_3 \beta \sqrt{\log 1/\beta}\right)^{n/4}.
$$
Note that for $\beta \le 1$ we have $\beta \log(1/\beta) \le 1$ and hence the above bound is at most $(c_4 \beta)^{n/8}$.

To conclude, we union bound over the choices for $I$, the number of which is ${n \choose \alpha n} \le 2^n$. Thus we can bound
the total probability by $2^n (c_4 \beta)^{n/8} = (c_5 \beta)^{n/8}$.
\end{proof}

\section{The LCD condition}
\label{sec:LCD}
In this section we will introduce the notion of the Lowest Common Denominator (LCD) of a vector, which is a variant of the LCD definition in \cite{rudelson2013lecture}.
Informally, the LCD of a vector is a robust notion of ``almost periodicity''; concretely, it is the least multiple where most of its entries are close to integers.

Given $x \in \R^n$ let $x=[x]+\{x\}$ be its decomposition into integer and fractional parts, where $[x] \in \Z^n$ and $\{x\} \in [-1/2,1/2]^n$. 

\begin{definition}[Least common denominator (LCD)]
Let $\alpha,\beta \in (0,1)$. Given a unit vector $x \in S^{n-1}$, its least common denominator (LCD), denoted $\LCD_{\alpha,\beta}(x)$, is the infimum of $D >0$ such that we can decompose $\left\{Dx\right\}=u+v$, where $u$ is $(\alpha n)$-sparse and $\|v\|_{2} \le \beta \min(D,\sqrt{n})$.
\end{definition}

\begin{claim}
\label{compressible_LCD}
Assume $x \in S^{n-1}$ is $(5\alpha,\beta)$-incompressible. Then $\LCD_{\alpha,\beta}(x) > \sqrt{\alpha n}$.
\end{claim}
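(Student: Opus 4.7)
My plan is to argue by contrapositive: I would assume $\LCD_{\alpha,\beta}(x) \le \sqrt{\alpha n}$ and deduce that $x$ must be $(5\alpha,\beta)$-compressible. Unwinding the definition of the LCD, there exists some $D$ essentially bounded by $\sqrt{\alpha n}$ (any $D$ sufficiently close to the infimum) with $\{Dx\} = u + v$, where $u$ is $(\alpha n)$-sparse and $\|v\|_2 \le \beta\min(D,\sqrt n) = \beta D$, the last equality using $D \le \sqrt{\alpha n}\le \sqrt n$.

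The crux of the argument is the observation that the integer-rounding vector $[Dx]\in\Z^n$ is itself forced to be sparse whenever $D^2$ is not too large. Indeed, any index $i$ with $[Dx]_i \ne 0$ must satisfy $|Dx_i|\ge 1/2$, so $x_i^2 \ge 1/(4D^2)$. Since $\|x\|_2 = 1$, at most $4D^2 \le 4\alpha n$ coordinates can contribute, giving $|\supp([Dx])| \le 4\alpha n$. Combining with the $(\alpha n)$-sparsity of $u$, the set
\[
S := \supp([Dx]) \cup \supp(u)
\]
has cardinality at most $5\alpha n$.

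I would then finish by splitting $x$ along $S$. For every $i\notin S$ both $[Dx]_i$ and $u_i$ vanish, so $Dx_i = v_i$ and hence $x_i = v_i/D$. Writing $x = x|_S + x|_{[n]\setminus S}$, the first summand is $(5\alpha n)$-sparse while
\[
\|x|_{[n]\setminus S}\|_2 \;\le\; \frac{\|v\|_2}{D} \;\le\; \beta,
\]
which exhibits $x$ as $(5\alpha,\beta)$-compressible and contradicts the hypothesis.

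I do not anticipate a substantial obstacle: once the sparsity bound on $[Dx]$ is extracted via the trivial $\ell_2$ counting, the rest is bookkeeping. The only minor wrinkle is that the LCD is defined as an infimum and need not be attained at $D=\sqrt{\alpha n}$; one either invokes a closedness/lower-semicontinuity property of the set of valid $D$, or approximates by $D$ slightly below $\sqrt{\alpha n}$ so that the integer count $|\supp([Dx])|$ stays at most $4\alpha n$ and the $|S|\le 5\alpha n$ bound still closes the argument.
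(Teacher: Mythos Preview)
Your proposal is correct and follows essentially the same route as the paper: bound $|\supp([Dx])|\le 4\alpha n$ via the $\ell_2$ count $\sum x_i^2=1$ and $|Dx_i|\ge 1/2$, combine with the $(\alpha n)$-sparse $u$ to get a $(5\alpha n)$-sparse part, and observe the remainder $v/D$ has norm at most $\beta$. The paper in fact writes ``let $D=\LCD_{\alpha,\beta}(x)$'' and proceeds as if the infimum were attained, so your flagging of the infimum issue is, if anything, more careful; note only that the approximating valid $D$ should be taken slightly \emph{above} the LCD (hence $\le \sqrt{\alpha n}+\eps$), after which the integer count $|\supp([Dx])|$ is still at most $4\alpha n$ for $\eps$ small enough and the bookkeeping closes exactly as you describe.
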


\begin{proof}
Let $D=\LCD_{\alpha,\beta}(x)$ and assume towards a contradiction that $D \le \sqrt{\alpha n}$. Let $y=Dx$. As $\|y\|_2^2 \le \alpha n$ there are at most $4 \alpha n$ coordinates $i \in [n]$ where $|y_i| \ge 1/2$. In all other coordinates $\{y_i\}=y_i$, and hence $y-\{y\}$ is $(4 \alpha n)$-sparse.
By assumption we can decompose $\{y\}=u+v$ where $u$ is $(\alpha n)$-sparse and $\|v\|_2 \le \beta D$. This implies that we can decompose $y=u'+v$ where $u'$ is $(5 \alpha n)$-sparse. Thus, we can decompose $x=y/D$ as $x=u''+v''$, where $u''=u/D$ is  $(5 \alpha n)$-sparse and $v''=v/D$ satisfies $\|v''\|_2 \le \beta$. This violates the assumption that $x$ is $(5\alpha,\beta)$-incompressible.
\end{proof}

Our main goal in this section is to prove the following lemma, which extends \Cref{anticonc} assuming $x$ has large LCD.
To get intuition, we note that the lemma below is useful as long as $\beta \ll \gamma \ll 1$. We will later set $\gamma = \sqrt{\beta}$ to be such a choice. In particular, if we set $\beta=m^{-1/2}$ then we have $\gamma=m^{-1/4}$.

\begin{lemma}
\label{anticonc-LCD}
Let $X \sim \D^n$.
Let $\alpha, \beta, \gamma \in (0,1/2)$, $x \in S^{n-1}$ be $(\alpha, \gamma)$-incompressible and set $D=\LCD_{\alpha,\beta}(x)$.
Then for every $\eps \ge 1/2\pi m D$, it holds that
$$
\Pr\left[|\ip{X,x}| \le \eps\right] \le
c \left( \frac{\eps}{\gamma} + \frac{1}{(\alpha \beta m)^{\alpha n}} \right).
$$
\end{lemma}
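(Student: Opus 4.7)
The plan is to apply Esseen's inequality to reduce $\Pr[|\ip{X,x}|\le\eps]$ to a Fourier integral. Writing $\phi_Y$ for the characteristic function of the integer uniform distribution on $\{-m,\ldots,m\}$ (so $X_j=Y_j/m$), Esseen together with the change of variable $t=2\pi m\xi$ yields
$$
\Pr[|\ip{X,x}|\le\eps]\le Cm\eps\int_{|\xi|\le 1/(2\pi m\eps)}\prod_{j=1}^n|\phi_Y(2\pi\xi x_j)|\,d\xi.
$$
The hypothesis $\eps\ge 1/(2\pi mD)$ conveniently restricts the integration to $|\xi|\le D$, precisely the range where the LCD has not yet fired. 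Using the Dirichlet kernel formula $\phi_Y(2\pi u)=\sin(\pi(2m+1)u)/((2m+1)\sin(\pi u))$, the pointwise bound $|\phi_Y(2\pi u)|\le\exp(-c\min(m^2\|u\|_\Z^2,1))$ (with $\|\cdot\|_\Z$ denoting distance to the nearest integer) gives $\prod_j|\phi_Y(2\pi\xi x_j)|\le\exp(-c\,g(\xi))$ for $g(\xi):=\sum_{j=1}^n\min(m^2\|\xi x_j\|_\Z^2,1)$. The rest of the argument is to produce a good lower bound on $g$ over $[-D,D]$.

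I would split the integration at $|\xi|=\sqrt{\alpha n}/m$. For the central part $|\xi|\le\sqrt{\alpha n}/m$, incompressibility of $x$ is the right tool: taking $T^\star$ to be the indices of the top $\alpha n$ coordinates of $|x|$, the $(\alpha n)$-sparse competitor $x|_{T^\star}$ forces $\sum_{j\notin T^\star}x_j^2\ge\gamma^2$, and Cauchy--Schwarz gives $|x_j|\le 1/\sqrt{\alpha n}$ for $j\notin T^\star$. In the central range $|\xi x_j|\le 1/m$ and $\|\xi x_j\|_\Z=|\xi x_j|$ for every bulk coordinate, so $g(\xi)\ge m^2\xi^2\sum_{j\notin T^\star}x_j^2\ge m^2\gamma^2\xi^2$. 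Integrating $\exp(-c m^2\gamma^2\xi^2)$ contributes $O(1/(m\gamma))$, and the Esseen prefactor $O(m\eps)$ produces the main term $O(\eps/\gamma)$.

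For the outer part $|\xi|\in(\sqrt{\alpha n}/m,\,D]$, I would invoke the LCD dichotomy. Set $S(\xi):=\{j:m\|\xi x_j\|_\Z\ge 1\}$. Either (Case A) $|S(\xi)|\ge\alpha n$, in which case $g(\xi)\ge\alpha n$ and the sharper tail estimate $|\phi_Y(2\pi u)|\le C/(m\|u\|_\Z)$ holds on each large coordinate; combining this with the LCD non-resonance condition (to control the typical size of $\|\xi x_j\|_\Z$ on $S(\xi)$) is expected to produce a per-coordinate factor of order $(\alpha\beta m)^{-1}$, giving $\prod_{j\in S(\xi)}|\phi_Y(2\pi\xi x_j)|\lesssim(\alpha\beta m)^{-\alpha n}$. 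Or (Case B) $|S(\xi)|<\alpha n$, in which case $u:=\{\xi x\}|_{S(\xi)}$ is $(\alpha n)$-sparse and the condition $|\xi|<D$ together with the LCD definition forces $\|\{\xi x\}-u\|_2>\beta\min(|\xi|,\sqrt n)$, giving $g(\xi)>m^2\beta^2\min(\xi^2,n)$. Integrating either bound over the outer range and multiplying by the Esseen prefactor then contributes the error term $(\alpha\beta m)^{-\alpha n}$.

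The main obstacle is obtaining the polynomial-in-$m$ error form $(\alpha\beta m)^{-\alpha n}$ rather than the naive $e^{-c\alpha n}$ one gets from $g(\xi)\ge\alpha n$ alone. This requires going beyond the Gaussian approximation to the Dirichlet kernel and exploiting its exact $1/(m\|\cdot\|_\Z)$ decay, together with a careful accounting of how the LCD non-resonance constrains the fractional parts $\|\xi x_j\|_\Z$ on the large set $S(\xi)$; the factor $\alpha\beta$ in the denominator encodes both the sparsity threshold and the LCD tolerance, and tracking it cleanly across Cases A and B is the technical heart of the proof.
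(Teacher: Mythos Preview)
Your overall architecture—Esseen, then split the Fourier integral into a ``central'' piece governed by incompressibility and an ``outer'' piece governed by the LCD—matches the paper. But the split point and the treatment of each piece differ in ways that leave a real gap.

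\textbf{The central region is too small.} You split at $|\xi|=\sqrt{\alpha n}/m$, where the Gaussian bound $g(\xi)\ge m^2\gamma^2\xi^2$ on bulk coordinates is still literally valid. The paper splits at the much larger scale $t=T=\pi m\sqrt{\alpha n-1}$, i.e.\ $|\xi|\asymp\sqrt{\alpha n}$. Over the intermediate range $|\xi|\in(\sqrt{\alpha n}/m,\sqrt{\alpha n})$ your Gaussian bound on the bulk saturates (many bulk coordinates have $m|\xi x_j|>1$), and your Case~B fallback from the LCD only yields $g(\xi)\ge m^2\beta^2\xi^2$. Integrating $\exp(-cm^2\beta^2\xi^2)$ gives $O(1/(m\beta))$, hence a contribution $O(\eps/\beta)$ after the Esseen prefactor—this is strictly worse than the target $O(\eps/\gamma)$ whenever $\gamma\gg\beta$ (which is exactly the regime of interest, e.g.\ $\gamma=\sqrt\beta$). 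The paper closes this gap with \Cref{firstint}: it keeps the sharper bound $F(y)\le G(my)$ with its $1/(my)$ tail, orders the bulk coordinates, and integrates layer by layer over the intervals $[1/x_k,1/x_{k+1})$ where exactly $k$ coordinates have crossed into the tail regime. This layered computation, not a single Gaussian, is what produces the clean $O(1/\gamma)$ for the whole range up to $T$.

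\textbf{The outer dichotomy is the wrong one.} Your Case~A/B split uses the threshold $\|\xi x_j\|_\Z\ge 1/m$, which does not interface with the LCD. From $|S(\xi)|\ge\alpha n$ alone you only get $\prod|\phi_Y|\le e^{-c\alpha n}$, and you correctly flag that upgrading this to $(\alpha\beta m)^{-\alpha n}$ is unresolved. The paper avoids the dichotomy entirely: for $t\ge T$ the LCD (contrapositively) forces at least $\alpha n$ coordinates with $\|\xi x_j\|_\Z\ge\beta(t)\gtrsim\alpha\beta$; on each of these the $1/(my)$ tail of $G$ gives a factor $\le 1/(\alpha\beta m)$, so the integrand is pointwise $\le(\alpha\beta m)^{-\alpha n}$ on all of $[T,1/\eps]$. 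Multiplying by the interval length $1/\eps$ cancels the Esseen prefactor and delivers the second term directly. The threshold you want is the LCD tolerance $\beta(t)$, not $1/m$.
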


The proof of \Cref{anticonc-LCD} relies on Esseen's lemma \cite{esseen1966kolmogorov}.

\begin{lemma}[Esseen's Lemma]\label{esseen}
Let $Y$ be a real-valued random variable. Let $\phi_Y(t) = \E[e^{it Y}]$ denote the characteristic function of $Y$. Then
for any $\eps>0$, it holds that
$$
\Pr[|Y|\le \eps] \le c \eps\int_{-1/\eps}^{1/\eps} |\phi_Y(t)|dt.
$$
\end{lemma}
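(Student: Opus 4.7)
The plan is to prove Esseen's inequality by the standard Fourier-analytic smoothing argument: construct a nonnegative ``test function'' $h$ whose Fourier transform is supported on $[-1/\eps,1/\eps]$ and which dominates (up to scale) the indicator of the interval $[-\eps,\eps]$, and then rewrite $\E[h(Y)]$ via Fourier inversion in terms of $\phi_Y$.

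Concretely, I would pick $\hat h(t) = (1-|t|\eps)_+$, the triangle function supported on $[-1/\eps,1/\eps]$. Its inverse Fourier transform is (a rescaling of) the Fejér kernel,
$$
h(y) \;=\; \frac{1}{2\pi}\int_{-1/\eps}^{1/\eps}(1-|t|\eps)\,e^{-ity}\,dt \;=\; \frac{2\eps}{\pi y^{2}}\,\sin^{2}\!\Bigl(\tfrac{y}{2\eps}\Bigr),
$$
which is nonnegative everywhere. Using the elementary bound $\sin(u)\ge (2/\pi)u$ on $[0,\pi/2]$ with $u=y/(2\eps)$, one checks that for $|y|\le\eps$,
$$
h(y)\;\ge\;\frac{2\eps}{\pi y^{2}}\cdot\frac{y^{2}}{\pi^{2}\eps^{2}}\;=\;\frac{2}{\pi^{3}\eps}\;=:\;\frac{c_{1}}{\eps}.
$$
Hence $c_{1}\eps^{-1}\mathbf{1}_{[-\eps,\eps]}(y)\le h(y)$ pointwise, which yields
$$
\Pr[|Y|\le\eps]\;\le\;\frac{\eps}{c_{1}}\,\E[h(Y)].
$$

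Next I would compute $\E[h(Y)]$ by Fubini, swapping expectation and integral (justified since $\hat h$ is integrable and bounded):
$$
\E[h(Y)] \;=\; \frac{1}{2\pi}\int_{-1/\eps}^{1/\eps}(1-|t|\eps)\,\E\bigl[e^{-itY}\bigr]\,dt \;=\; \frac{1}{2\pi}\int_{-1/\eps}^{1/\eps}(1-|t|\eps)\,\overline{\phi_{Y}(t)}\,dt.
$$
Taking absolute values and using $0\le 1-|t|\eps\le 1$ on the interval of integration gives
$$
\E[h(Y)] \;\le\; \frac{1}{2\pi}\int_{-1/\eps}^{1/\eps}|\phi_{Y}(t)|\,dt,
$$
and combining with the previous inequality yields $\Pr[|Y|\le\eps]\le c\,\eps\int_{-1/\eps}^{1/\eps}|\phi_Y(t)|\,dt$ for the absolute constant $c=1/(2\pi c_{1})$.

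The only genuinely nonroutine step is the choice of $h$: one needs a function whose Fourier transform is compactly supported in $[-1/\eps,1/\eps]$ (to restrict the range of $t$) and which is simultaneously bounded below by a multiple of $1/\eps$ on $[-\eps,\eps]$ (to recover the small-ball probability). The Fejér kernel scaled by $\eps$ achieves both properties, and once it is in hand the rest of the argument is a one-line application of Fourier inversion and the triangle inequality. I expect no other technical obstacle.
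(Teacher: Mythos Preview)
Your argument is correct and is precisely the standard proof of Esseen's concentration inequality: smoothing by the Fej\'er kernel, whose Fourier transform is the compactly supported triangle function, followed by Fourier inversion and the trivial bound $0\le 1-|t|\eps\le 1$. The computation of $h$, the lower bound $h(y)\ge 2/(\pi^3\eps)$ on $[-\eps,\eps]$, and the Fubini step are all fine; the resulting constant is $c=\pi^2/4$.

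There is nothing to compare against in the paper itself: the lemma is stated there with a citation to Esseen \cite{esseen1966kolmogorov} and is not proved. Your write-up is exactly the classical proof one would find in a textbook treatment, so in that sense it coincides with what the paper implicitly invokes.
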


Before proving \Cref{anticonc-LCD}, we need some auxiliary claims.
Fix some $x \in S^{n-1}$, let $X \sim \D^n$ and let $Y=\ip{X,x}$. In order to apply \Cref{esseen}, we need to compute
the characteristic function of $Y$.

\begin{claim}
\label{fourier}
Let $X \sim \D^n$, $x \in S^{n-1}$ and set $Y=\ip{X,x}$. For $t \in \R$ it holds that
$$
|\phi_Y(t)| =  \prod_{k=1}^n F\left(\frac{x_k t}{2\pi m}\right)
$$
where $F:\R \to \R$ is defined as follows:
$$
F(y) = \left|\frac{\sin\left((2m+1)\pi y\right)}{(2m+1)\sin(\pi y)}\right|.
$$
\end{claim}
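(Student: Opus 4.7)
The plan is to use the independence of the coordinates of $X$ to factor the characteristic function into a product of single-variable characteristic functions, then evaluate each factor as a Dirichlet kernel sum and match it to the claimed expression $F$.

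First I would write $Y = \ip{X,x} = \sum_{k=1}^n x_k X_k$ and use the fact that for independent real random variables $Z_1,\ldots,Z_n$, the characteristic function factorizes: $\E[e^{it \sum_k Z_k}] = \prod_k \E[e^{it Z_k}]$. Applied with $Z_k = x_k X_k$, this gives $\phi_Y(t) = \prod_{k=1}^n \E[e^{i t x_k X_k}]$. Taking absolute values, it suffices to show that $|\E[e^{i t x_k X_k}]| = F(x_k t / (2\pi m))$ for each $k$.

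Next I would compute the single-coordinate expectation. Since $X_k$ is uniform on $\{a/m : a \in \{-m,\ldots,m\}\}$,
$$
\E[e^{i t x_k X_k}] = \frac{1}{2m+1}\sum_{a=-m}^{m} e^{i a \theta}, \qquad \theta := \frac{t x_k}{m}.
$$
This is a symmetric geometric sum, which evaluates by the standard Dirichlet kernel identity:
$$
\sum_{a=-m}^{m} e^{i a \theta} = \frac{\sin\bigl((m+\tfrac12)\theta\bigr)}{\sin(\theta/2)}.
$$
(One derives this by multiplying the geometric series $\sum_{a=-m}^m z^a = (z^{m+1}-z^{-m})/(z-1)$ by $z^{-1/2}/z^{-1/2}$ with $z=e^{i\theta}$ and converting to sines.)

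Finally I would substitute $\theta = tx_k/m$ and introduce $y := \frac{x_k t}{2\pi m}$, so that $\theta/2 = \pi y$ and $(m+\tfrac12)\theta = (2m+1)\pi y$. Then
$$
\left|\E[e^{i t x_k X_k}]\right| = \left|\frac{\sin\bigl((2m+1)\pi y\bigr)}{(2m+1)\sin(\pi y)}\right| = F\!\left(\frac{x_k t}{2\pi m}\right),
$$
and multiplying over $k$ yields the claimed formula. There is no real obstacle here; the only care needed is with the bookkeeping of the $2\pi$'s and the sign/absolute-value issue (the individual factor $\E[e^{itx_k X_k}]$ is real by the symmetry of $\D$, but we only need the modulus, which is why the $F$ in the statement is written with absolute values).
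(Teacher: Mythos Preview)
Your proof is correct and follows essentially the same approach as the paper: factor $\phi_Y$ by independence, evaluate each factor as the Dirichlet kernel sum $\frac{1}{2m+1}\sum_{a=-m}^m e^{ia\theta}$ with $\theta = x_k t/m$, and identify the result with $F(x_k t/(2\pi m))$. Your write-up is in fact slightly more careful with the substitution bookkeeping than the paper's own proof.
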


\begin{proof}
We have $Y=\sum x_i \xi_i$ where $\xi_1,\ldots,\xi_n \sim \D$ are independent. Hence
$$
\phi_Y(t) = \prod_{k=1}^n \E[e^{i x_k \xi_k t}].
$$
Next we compute
$$
\E[e^{i x_k \xi_k t}] = \frac{1}{2m+1}\sum_{\ell = -m}^{m}e^{i x_k (\ell/m) t}
= \frac{1}{2m+1}\cdot\frac{\sin(\frac{2m+1}{2m}x_k t)}{\sin(\frac{1}{2m} x_k t)}.
$$
Hence
$$
\left|\E[e^{i t x_k \xi_k}]\right| = F\left(\frac{x_k t}{2\pi m}\right).
$$
\end{proof}

The next claim proves some basic properties of the function $F$.

\begin{claim}
\label{F_prop}
The function $F$ satisfies the following properties:
\begin{enumerate}
\item $F$ is symmetric: $F(y)=F(-y)$ for all $y \in \R$.
\item $F$ is invariant to shifts by integers: $F(y)=F(\{y\})$ for $y \in \R$.
\item $F$ is bounded: $F(y) \in [0,1]$ for all $y \in \R$.
\item $F(y) \le G(m y)$ for $y \in [0,1/2]$, where $G:\R_{+} \to [0,1]$ is defined as follows:
$$
G(y) =
\begin{cases}
e^{-\eta y^2} & \text{if } y \in [0,1]\\
\frac{e^{-\eta}}{y} & \text{if } y \ge 1
\end{cases}
$$
Here, $\eta>0$ is an absolute constant. Note that $G$ is decreasing.
\end{enumerate}
\end{claim}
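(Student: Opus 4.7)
Properties (1)--(3) follow immediately from standard manipulations. Symmetry is a consequence of $\sin$ being odd. Integer-shift invariance uses that $2m+1$ is odd, so a shift $y \mapsto y + k$ with $k \in \Z$ multiplies both $\sin((2m+1)\pi y)$ and $\sin(\pi y)$ by $(-1)^k$, and the absolute value of the ratio is unchanged. For boundedness I would invoke the Dirichlet kernel identity $\sin((2m+1)\pi y)/\sin(\pi y) = \sum_{k=-m}^{m} e^{2\pi i k y}$ together with the triangle inequality.

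The content is in property (4), which I would prove by splitting the interval $(0,1/2]$ at $y = 1/(2m+1)$, the first positive zero of the numerator. On the near-zero subinterval $y \in (0, 1/(2m+1)]$, the kernel $D_m(y) := \sin((2m+1)\pi y)/\sin(\pi y)$ is nonnegative, so $F(y) = D_m(y)/(2m+1)$ and I can work directly with the cosine expansion $D_m(y) = 1 + 2\sum_{k=1}^m \cos(2\pi k y)$. All the angles $2\pi k y$ lie in $[0,\pi]$, so I can apply the elementary bound $\cos\theta \le 1 - (2/\pi^2)\theta^2$ (which follows from $\cos\theta = 1 - 2\sin^2(\theta/2)$ and $|\sin\phi| \ge (2/\pi)|\phi|$ on $[-\pi/2,\pi/2]$) term by term, evaluate $\sum_{k=1}^m k^2 = m(m+1)(2m+1)/6$, and conclude $F(y) \le 1 - c(my)^2 \le e^{-c(my)^2}$ for an explicit constant $c > 0$. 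On the bulk subinterval $y \in [1/(2m+1), 1/2]$, I would use $|\sin((2m+1)\pi y)| \le 1$ combined with the concavity-based lower bound $\sin(\pi y) \ge 2y$ on $[0,1/2]$ to obtain $F(y) \le 1/(2(2m+1)y) \le 1/(4my)$.

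The only piece requiring care, and what I expect to be the main obstacle, is choosing $\eta$ so that these two regime-specific estimates both imply $F(y) \le G(my)$. When $my \ge 1$, the bulk bound gives $F(y) \le 1/(4my) \le e^{-\eta}/(my)$ provided $\eta \le \log 4$; when $my \in (m/(2m+1), 1]$ the bulk bound must also beat the Gaussian branch $e^{-\eta(my)^2}$ of $G$, which reduces to checking $\eta z^2 \le \log(4z)$ on that range and is again binding at $z = 1$ with $\eta \le \log 4$; and for the near-zero regime any $\eta \le c$ suffices. Taking $\eta$ to be any small positive absolute constant (say $\eta = 1$) satisfies all three checks, and the proof is complete.
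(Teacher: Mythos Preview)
Your argument is correct and follows the same two-regime strategy as the paper: bound $F$ by a Gaussian $e^{-c(my)^2}$ near $y=0$ and by $c_1/(my)$ with $c_1<1$ away from $0$, then take $\eta$ small enough to absorb both. The only substantive differences are in execution. The paper splits at $y=1/m$, which aligns exactly with the branch point of $G(my)$, so no intermediate range needs checking; you split at the first zero $y=1/(2m+1)$ and correctly verify the extra overlap $my\in(m/(2m+1),1]$ by minimizing $\log(4z)/z^2$. For the near-zero bound, the paper multiplies the Taylor series of $\sin((2m+1)\pi y)$ against the Laurent series of $\csc(\pi y)$ and reads off the $y^2$ coefficient; your route through the Dirichlet-kernel identity $D_m(y)=1+2\sum_{k=1}^m\cos(2\pi k y)$ together with $\cos\theta\le 1-(2/\pi^2)\theta^2$ and $\sum k^2=m(m+1)(2m+1)/6$ is more explicit and yields the concrete constant $c=8/3$, so the paper's somewhat informal series step is made fully rigorous in your version. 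Either way one arrives at $\eta=\min(\ln(1/c_1),c)$, and your check that $\eta=1$ suffices is fine.
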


\begin{proof}
The first three claims follow immediately from the definition of $F$ in \Cref{fourier}. In order to prove the last claim, we will prove that $F(y) \le \frac{c_1}{ my}$ for $y \in [1/m,1/2]$ for some $c_1 \in (0,1)$; and that $F(y) \le \exp(-c_2(my)^2)$ for $y \in [0,1/m]$ for some $c_2>0$. The claim then follows by
taking $\eta = \min(\ln(1/c_1),c_2)$.

First, note that $F(y) \le \frac{1}{(2m+1)|\sin(\pi y)|}$. Using Taylor expansion at $0$, we get for $y \in [0,1/2]$ that
$$
\sin\left(\pi y\right) \ge \pi y-\frac{\pi^3y^3}{6} \ge \frac{\pi y}{2}.
$$
In particular, $F(y) \le \frac{1}{\pi my}$, which gives the desired bound for $c_1=1/\pi$.

Next, note that $F(y) = \frac{1}{2m+1}|\sin((2m+1)\pi y)\cdot\csc(\pi y)|$. The Laurent series of $\csc(x)$ at $x \ne 0$ is $\csc(x) = \frac{1}{x} + \frac{x}{6} + \frac{7x^3}{360} + \frac{31x^5}{15120} + \Theta(x^7)$  and the Taylor series for $\sin(x)$ is $\sin(x) = x - \frac{x^3}{3!} + \frac{x^5}{5!} + \Theta(x^7)$. So for $y \in [0,1/m]$ we have $F(y) \le 1 - c_2(my)^2 \le \exp(-c_2 (my)^2)$.
\end{proof}

We also need the following claim, which shows that incompressible vectors retain a large fraction of their norm when restricted to small coordinates. We use the following notation: given $x \in \R^n$ and a set of coordinates $J \subset [n]$, we denote by $x|_J \in \R^J$ the restriction of $x$ to coordinates in $J$.

\begin{claim}\label{spread}
Let $x \in S^{n-1}$ be $(\alpha,\gamma)$-incompressible.
Let $J = \left\{i: x_i \le \frac{1}{\sqrt{\alpha n - 1}}\right\}$. Then
$$
\|x|_J\|_2^2 \ge \|x|_J\|_{\infty}^2 + \gamma^2.
$$
\end{claim}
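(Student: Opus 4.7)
The plan is to build an explicit decomposition $x = u + v$ that witnesses compressibility, and then invoke the incompressibility hypothesis to force $v$ (the small piece) to carry at least $\gamma^2$ in squared norm. The pieces will be chosen so that $v = x|_{J'}$ for $J' = J \setminus \{i^*\}$, where $i^*$ is the index attaining $\|x|_J\|_\infty$, which is exactly what the conclusion asks about.

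First I would count how many ``large'' coordinates $x$ has. For every $i \in J^c$ we have $x_i^2 > 1/(\alpha n - 1)$, and since $\|x\|_2^2 = 1$ this forces $|J^c| < \alpha n - 1$. In particular $|J^c| + 1 \le \alpha n$, so any vector supported on $J^c \cup \{i^*\}$ is $(\alpha n)$-sparse.

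Next I would carry out the decomposition. Pick $i^* \in J$ so that $|x_{i^*}| = \|x|_J\|_\infty$, and set $J' = J \setminus \{i^*\}$. Define
\[
u := x \cdot \mathbf{1}_{J^c \cup \{i^*\}}, \qquad v := x \cdot \mathbf{1}_{J'},
\]
so $x = u + v$ and $u$ has support of size at most $|J^c| + 1 \le \alpha n$, i.e., $u$ is $(\alpha n)$-sparse. Since $x$ is $(\alpha,\gamma)$-incompressible, the only way this decomposition can fail to witness compressibility is $\|v\|_2 > \gamma$, i.e., $\|x|_{J'}\|_2^2 > \gamma^2$.

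Finally I would just unfold the definitions: by orthogonality of the coordinate projections,
\[
\|x|_J\|_2^2 \;=\; \|x|_{J'}\|_2^2 + x_{i^*}^2 \;=\; \|x|_{J'}\|_2^2 + \|x|_J\|_\infty^2 \;>\; \gamma^2 + \|x|_J\|_\infty^2,
\]
which is the desired inequality. There is no genuinely hard step here; the only subtlety to watch is the bookkeeping around strict versus non-strict inequalities, ensuring that $|J^c| + 1 \le \alpha n$ so that $u$ really qualifies as $(\alpha n)$-sparse, and interpreting the defining inequality $x_i \le 1/\sqrt{\alpha n - 1}$ via $|x_i|$ (which is what the conclusion about $\|\cdot\|_\infty$ forces in any case).
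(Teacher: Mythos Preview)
Your proof is correct and follows essentially the same approach as the paper: bound $|J^c|$ using $\|x\|_2=1$, remove from $J$ the coordinate achieving $\|x|_J\|_\infty$, and apply incompressibility to the remaining restriction. Your flag about reading the defining condition as $|x_i|\le 1/\sqrt{\alpha n-1}$ is well taken and matches how the paper implicitly uses it.
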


\begin{proof}
Let $J^c = [n] \setminus J$.
Since $x$ is a unit vector, we have $|J^c| \le \alpha n - 1$. Let $j \in J$ be such that $|x_j|$ is maximal
and take $K = J \setminus \{j\}$. Then $|K^c| \le \alpha n$, and since we assume that $x$ is $(\alpha,\gamma)$-incompressible,
we  have $\|x|_K\|_2 \ge \gamma$. This completes the proof, since
$$
\|x|_J\|_2^2 -\|x|_J\|_{\infty}^2  = \|x|_J\|_2^2 -x_j^2 = \|x|_K\|_2^2 \ge \gamma^2.
$$
\end{proof}

We would need the following lemma in the computations later on.

\begin{lemma}\label{firstint}
Let $\gamma, \delta>0$.
Let $x \in \R^n$ be a vector such that $\|x\|_{\infty} \le \delta$ and $\|x\|_2^2 \ge \|x\|_{\infty}^2 + \gamma^2$.
Let $T = \pi m / \delta$.
Then
$$
I = \int_{0}^{T} \prod_{i=1}^n F\left(\frac{x_i t}{2\pi m}\right) dt \le \frac{c}{\gamma}.
$$
\end{lemma}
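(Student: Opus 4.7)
The plan is to apply \Cref{F_prop}, substitute to normalize the largest coordinate $x_j := \|x\|_\infty$, and analyze the resulting integral by partitioning the $u$-axis into Gaussian and polynomial-tail regions governed by the sorted nonzero coordinates of $x$. WLOG $x_i \ge 0$ (by evenness of $F$). For $t \in [0, T]$ one has $tx_i/(2\pi m) \le 1/2$, so \Cref{F_prop} gives $F(tx_i/(2\pi m)) \le G(tx_i/(2\pi))$; substituting $t = 2\pi u/x_j$ yields $I \le (2\pi/x_j)\, J$ with
\[
J = \int_0^U \prod_i G(u\tau_i)\,du, \quad \tau_i = x_i/x_j \in [0,1], \quad U = mx_j/(2\delta) \le m/2.
\]
Setting $\rho^2 := \sum_{i \ne j}\tau_i^2$, the hypothesis gives $\rho \ge \gamma/x_j$, so it suffices to show $J \le C/\rho$.

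I would split $J$ at $u = 1$. On $[0,1]$ every $G(u\tau_i) = \exp(-\eta u^2\tau_i^2)$ is Gaussian, so $\prod G \le \exp(-\eta u^2(1+\rho^2))$ and $\int_0^1 \le O(1/\sqrt{1+\rho^2})$, which is absorbed into the target $1/\rho$. For the harder range $[1, U]$, sort the nonzero $\tau_i$ (for $i \ne j$) as $\tau_{(1)} \ge \tau_{(2)} \ge \dots$ and set $s_k = 1/\tau_{(k)}$, with $s_0 = 1$. On the slab $[s_k, s_{k+1}]$ exactly the indices $\{j,(1),\dots,(k)\}$ sit in the polynomial tail and the rest stay Gaussian, giving
\[
\prod_i G(u\tau_i) = \frac{e^{-\eta(k+1)}}{u^{k+1}\prod_{l\le k}\tau_{(l)}}\exp(-\eta u^2 P_k), \quad P_k := \sum_{l>k}\tau_{(l)}^2.
\]
Bounding the Gaussian factor by its value $e^{-\eta \beta_k^2}$ at $u = s_k$ (where $\beta_k^2 := P_k/\tau_{(k)}^2$), using $\int_{s_k}^\infty u^{-(k+1)}\,du = \tau_{(k)}^k/k$, and $\prod_{l \le k}\tau_{(l)} \ge \tau_{(k)}^k$, I would obtain $J_2^{(k)} \le e^{-\eta(k+1+\beta_k^2)}/k$ for $k \ge 1$.

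The crux is the identity $\beta_k^2 + k \ge \rho^2$: from $\tau_{(k)} \le 1$ we have $\beta_k^2 \ge P_k$, while $P_k = \rho^2 - \sum_{l\le k}\tau_{(l)}^2 \ge \rho^2 - k$ since each $\tau_{(l)}^2 \le 1$. Splitting the sum at $k = \rho^2$ and applying this identity gives
\[
\sum_{k \ge 1} J_2^{(k)} \le O\bigl(e^{-\eta\rho^2}\log(\rho+2)\bigr) \le O(1/\rho),
\]
where the last inequality is the elementary fact that $\sup_{\rho > 0}\rho\,e^{-\eta\rho^2}\log(\rho+2) < \infty$ (both endpoints $\rho \to 0, \infty$ send this to $0$). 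The initial slab $[1, s_1]$ (the formal $k=0$ case) is handled by the substitution $v = u\rho$ and again gives $O(1/\rho)$, completing the bound $J \le C/\rho$ and hence $I \le C'/\gamma$.

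The main obstacle will be the identity $\beta_k^2 + k \ge \rho^2$ together with the uniform sub-Gaussian bound, which together convert what looks like a $\log n$-lossy sum into a sharp $1/\rho$ estimate; without them the naive bound $\sum_k e^{-\eta k}/k = O(1)$ is inadequate in the regime $\rho > 1$ (i.e., $\|x\|_\infty < \gamma$), where the desired $1/\rho$ is smaller than $1$. Degenerate cases such as ties among the $\tau_{(l)}$ or zero coordinates require no extra work: degenerate slabs have zero length, and vanishing $\tau_i$ simply drop out of $\rho$.
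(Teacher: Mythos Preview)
Your argument is correct and shares the same skeleton as the paper's proof: pass from $F$ to $G$ via \Cref{F_prop}, sort the coordinates, and decompose the $t$-axis into slabs according to when each coordinate crosses from the Gaussian regime of $G$ into its polynomial tail. The difference is in the per-slab bookkeeping. You first normalize by $\|x\|_\infty$, then on each slab freeze the surviving Gaussian factor at its left-endpoint value $e^{-\eta\beta_k^2}$ and integrate the polynomial tail; this produces $J_2^{(k)}\le e^{-\eta(k+1+\beta_k^2)}/k$ and forces you to invoke the identity $\beta_k^2+k\ge\rho^2$ to tame the harmonic sum, arriving at $O(e^{-\eta\rho^2}\log(\rho+2))=O(1/\rho)$. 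The paper instead runs a dichotomy directly on each slab $I_k$: if the remaining tail mass satisfies $\sum_{i>k}x_i^2\ge\gamma^2/2$, it drops the polynomial factors (each $\le 1$ there) and integrates the Gaussian to get $I_k\le c\,e^{-\eta k}/\gamma$; otherwise $\sum_{i\le k}x_i^2>\|x\|_\infty^2+\gamma^2/2$, which forces $k\ge 2$ and $x_1\ge\gamma/\sqrt{2k}$, and integrating only the polynomial tail again yields $I_k\le c\,e^{-\eta k}/\gamma$. Either branch gives the same uniform per-slab bound, so the total is a clean geometric sum. Your normalization and the $\beta_k^2+k\ge\rho^2$ device work, but the paper's dichotomy avoids them entirely and is shorter.
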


\begin{proof}
To simplify the proof, we may assume by \Cref{F_prop}(1) that $x_i \ge 0$ for all $i$. Reorder the coordinates of $x$
so that $x_1 \ge x_2 \ge \ldots \ge x_n \ge 0$. Observe that for $x_i \in [0,T]$ we have $\frac{x_i t}{2 \pi m} \in [0,1/2]$ and hence
we can apply \Cref{F_prop}(4) and bound each term by $F\left(\frac{x_it}{2\pi m}\right) \le G\left(\frac{x_i t}{2\pi}\right)$. Thus
$$
I \le \int_{0}^{T} \prod_{i=1}^n G\left(\frac{x_i t}{2\pi}\right) dt =
2 \pi \int_{0}^{T / 2\pi} \prod_{i=1}^n G(x_i t) dt
\le 2 \pi \int_{0}^{\infty} \prod_{i=1}^n G(x_i t) dt.
$$

We  bound this last integral.
Let $t_i = 1 /x_i$ so that $t_1 \le t_2 \le \ldots \le t_n$. For simplicity of notation set $t_0=0, t_{n+1}=\infty$.
We break the computation of the integral to intervals $[t_k, t_{k+1})$ for $k=0,\ldots,n$, and denote by $I_k$ the integral in each interval:
$$
I_k = \int_{t_k}^{t_{k+1}} \prod_{i=1}^n G(x_i t) dt
= \int_{t_k}^{t_{k+1}} \prod_{i=1}^k \frac{e^{-\eta}}{x_i t} \cdot \prod_{i=k+1}^n e^{-\eta t^2 x_i^2} dt
= e^{-\eta k} \int_{t_k}^{t_{k+1}} \frac{e^{-\eta t^2 \sum_{i=k+1}^n x_i^2}}{t^k \prod_{i=1}^k x_i} dt.
$$

Fix $k$ and consider first the case that $\sum_{i=k+1}^n x_i^2 \ge \gamma^2/2$. In this case, using the fact that $x_i t  \ge 1$ for $i \in [k]$ and $t \in [t_k, t_{k+1}]$, we can bound $I_k$ by
$$
I_k \le e^{-\eta k} \int_{t_k}^{t_{k+1}} e^{-\eta \gamma^2 t^2/2} dt \le e^{-\eta k} \int_{0}^{\infty} e^{-\eta \gamma^2 t^2/2} dt \le \frac{c_1 e^{-\eta k}}{\gamma}.
$$

Next, consider the case that $\sum_{i=k+1}^n x_i^2 < \gamma^2/2$, which
means that $\sum_{i=1}^k x_i^2 > \|x\|_2^2 - \gamma^2/2 \ge \|x\|_{\infty}^2 + \gamma^2/2$. Observe that this is impossible for $k=0$ or $k=1$,
and hence we may assume $k \ge 2$.
In this case we bound
$$
I_k \le e^{-\eta k} \int_{t_k}^{t_{k+1}} \frac{1}{t^k\prod_{i=1}^k x_i }  dt
\le e^{-\eta k} \int_{t_k}^{\infty} \frac{1}{t^k\prod_{i=1}^k x_i} dt
= \frac{e^{-\eta k} x_k^{k-1}}{(k-1) \prod_{i=1}^k x_i}
\le \frac{e^{-\eta k}}{(k-1) x_1}.
$$
By our assumption, $\sum_{i=1}^k x_i^2 \ge \gamma^2/2$ and hence $x_1^2 \ge \gamma^2/2k$. Thus we can bound
$$
I_k \le \frac{e^{-\eta k}}{(k-1)\gamma/ \sqrt{2k}} \le \frac{c_2 e^{-\eta k}}{\gamma}.
$$
Overall, we bounded the integral by
$$
I \le 2 \pi \sum_{k=0}^n I_k \le 2 \pi \max(c_1,c_2) \sum_{k=0}^n \frac{e^{-\eta k}}{\gamma} \le \frac{c}{\gamma},
$$
where we used the fact that $c_1,c_2,\eta>0$ are all absolute constants.
\end{proof}

Now we have all the ingredients to complete proof of \Cref{anticonc-LCD}.

\begin{proof}[Proof of \Cref{anticonc-LCD}]
Let $Y=\ip{X,x}$. \Cref{esseen} and \Cref{fourier} give the bound
$$
\Pr[|Y|\le \eps] \le c_1 \eps I,
$$
where $I$ is the following integral:
$$
I = \int_{0}^{1/\eps} \prod_{i=1}^n F\left(\frac{x_i t}{2\pi m}\right) dt.
$$
Let $T = \pi m\sqrt{\alpha n-1}$. We will bound the integral in the regime $[0,T]$ and $[T,1/\eps]$, and denote the corresponding integrals
by $I_1,I_2$.

Consider first the integral $I_1$ in $[0,T]$. Let $\delta = 1 / \sqrt{\alpha n-1}$
and take $J = \{i: x_i \le \delta\}$.
Observe that by \Cref{F_prop}(3), we can bound $F\left(\frac{x_i t}{2\pi m}\right) \le 1$ for $i \notin J$.
Thus
$$
I_1 = \int_{0}^{T} \prod_{i=1}^n F\left(\frac{x_i t}{2\pi m}\right) dt \le
\int_{0}^{T} \prod_{i \in J} F\left(\frac{x_i t}{2\pi m}\right) dt.
$$
Next, as we assume that $x$ is $(\alpha,\gamma)$-incompressible, \Cref{spread} gives that $\|x|_J\|_2^2 \ge \|x|_J\|_{\infty}^2 + \gamma^2$.
Applying \Cref{firstint} to $x|_J$, we bound the first integral by
$$
I_1 \le \frac{c_2}{\gamma}.
$$

Next, consider the second integral $I_2$ in $[T,1/\eps]$. We will apply the LCD assumption to uniformly bound the integrand in this range.
Given $t \in [T, 1/\eps]$, let $y(t) = \left \{ \frac{x t}{2 \pi m} \right\} \in [-1/2,1/2]^n$,
 $\beta(t) = \beta \min(t / \sqrt{n}, 1)$ and $J(t) = \{i \in [n]: |y(t)_i| \ge \beta(t)\}$. As $t \le 1/\eps \le 2 \pi m D$, we have that $\frac{t}{2 \pi m} \le D = \LCD_{\alpha,\beta}(x)$, and hence $|J(t)| \ge \alpha n$. Applying \Cref{F_prop}, we bound the integrand by
$$
\prod_{i=1}^n F\left(\frac{x_i t}{2\pi m}\right) =
\prod_{i=1}^n F(y_i) \le
\prod_{i \in J(t)} F(y_i) \le
\prod_{i \in J(t)} G(m y_i) \le
\prod_{i \in J(t)} G(m \beta(t)) \le
G(m \beta(t))^{\alpha n}.
$$
Following up on this, we have
$$
\beta(t) \ge \beta(T) = \beta \frac{\sqrt{\alpha n - 1}}{\sqrt{n}} \ge \beta\sqrt{\alpha/2}  \ge \alpha \beta,
$$
where we used the assumptions that $\alpha n \ge 2$ and $\alpha \le 1/2$. We may assume that $\alpha \beta m \ge 1$, otherwise the conclusion of the lemma is trivial. In that case we have by \Cref{F_prop}(4) that
$$
G(m \beta(t)) \le G(\alpha \beta m) \le \frac{1}{\alpha \beta m}.
$$
Thus we can bound the integral $I_2$ by
$$
I_2 = \int_{T}^{1/\eps} \prod_{i=1}^n F\left(\frac{x_i t}{2\pi m}\right) dt \le
\frac{1 / \eps}{(\alpha \beta m)^{\alpha n}}.
$$
Overall, we get
$$
\Pr[|Y|\le \eps] \le c_1 \eps I = c_1 \eps (I_1 + I_2) \le \frac{c_1 c_2 \eps}{\gamma} + \frac{c_1}{(\alpha \beta m)^{\alpha n}}.
$$
\end{proof}

\section{Bounding the LCD}
\label{sec:LCD_bound}

Our main goal in this section is to prove that a random normal vector $X^*$ has large LCD with high probability.
Let $M'$ denote the $(n-1) \times n$ matrix with rows $X_1,\ldots,X_{n-1}$.
Let $D_0 = \sqrt{\alpha n}$ and $D_1 = \beta (\alpha \beta m)^{\alpha n}$ in this section.

\begin{lemma}
\label{LCD_large}
Let $\alpha \in (0,1/40)$, $\beta \in (0,1/2)$ and 
$D \in (1, D_1)$. Then 
$$
\Pr[\LCD_{\alpha,\beta}(X^*) \le D] \le D^2 \left(1 / \alpha c\right)^n \beta^{cn} 
$$
for some absolute constant $c \in (0,1)$.
\end{lemma}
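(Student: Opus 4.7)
The plan is a standard $\varepsilon$-net argument in the spirit of Rudelson's lecture notes. Unpacking the LCD definition, the event $\{\LCD_{\alpha,\beta}(X^*) \le D\}$ asserts the existence of $D' \in (0,D]$, an integer vector $p \in \Z^n$, a vector $u$ with support of size $\le \alpha n$ and entries in $[-1/2,1/2]$, and a vector $v$ with $\|v\|_2 \le \beta \min(D',\sqrt{n})$, such that $D' X^* = p + u + v$. Writing $q = p+u$, this rephrases as $\|D'X^* - q\|_2 \le \beta \min(D',\sqrt{n})$ where $q$ has the above structure.

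First I would dyadically decompose the range of $D'$ into bands $[D_k/2, D_k]$ with $D_k = 2^k$, contributing only an $O(\log D)$ overhead (absorbed into the $D^2$ factor). Within each band, I would discretize $D'$ at a scale $\Delta \asymp \beta\min(D_k,\sqrt{n})$, which only introduces an additive $\Delta$ error in the closeness condition. Then I would enumerate candidate $q$'s: the integer part $p$ lies in a ball of radius $O(D_k + \sqrt{n})$ and, by \Cref{int_ball}, admits at most $(1 + c D_k/\sqrt{n})^n$ choices; the sparse part $u$ is specified by a support $S\subset[n]$ with $|S|\le\alpha n$ (at most $\binom{n}{\alpha n}\le(e/\alpha)^{\alpha n}$ choices) and a discretized value in $[-1/2,1/2]^{|S|}$ at a coordinate scale comparable to $\Delta/\sqrt{\alpha n}$, giving at most $(c/\beta\min(1,\sqrt{n}/D_k))^{\alpha n}$ additional candidates per support.

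Next, for each candidate $q$, I would bound the probability that $X^*$ lies in the required tube around $q/\tilde D'$. Using $M'X^*=0$, the closeness condition implies $\|M'q\|_2 \le \|M'\|\cdot O(\beta\min(D_k,\sqrt{n}))$; on the high-probability event $\|M'\|\le O(\sqrt{n})$ from \Cref{largest}, this becomes $\|M'q\|_2 \le O(\beta\sqrt{n}\min(D_k,\sqrt{n}))$. Applying \Cref{anticonc} to the unit vector $q/\|q\|_2$ and tensorizing via \Cref{tensorization} over the $n-1$ rows of $M'$ yields a per-point bound of order $(c\beta\min(D_k,\sqrt{n})/\|q\|_2)^{n-1}$. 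Since $q\approx \tilde D'X^*$ gives $\|q\|_2 = \Theta(D_k)$, this simplifies to $(c\beta\min(1,\sqrt{n}/D_k))^{n-1}$. Finally, multiplying the per-point bound by the net cardinality and by the number of discretizations $D_k/\Delta$, then summing over dyadic levels, the largest scale $D_k\asymp D$ dominates and produces the claimed $D^2$ factor; the combinatorial factors $(e/\alpha)^{\alpha n}$ and the $u$-discretization overhead merge into a term of the form $(1/(\alpha c'))^n$, while the tensorized anti-concentration factor produces the $\beta^{cn}$ term.

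The main obstacle is calibrating the three discretization scales — for $D'$, for the sparse part $u$, and the anti-concentration parameter $\varepsilon$ — so that the discretization errors are uniformly dominated by the LCD slack $\beta\min(D',\sqrt n)$ while the net cardinality remains small enough to be overcome by the per-point probability. Two related subtleties are (i) checking that the effective $\varepsilon$ stays above the anti-concentration threshold $\varepsilon_0 = \sqrt{\log m}/m$ throughout the regime $D \le D_1 = \beta(\alpha\beta m)^{\alpha n}$ — this upper bound on $D$ is chosen precisely to ensure this — and (ii) handling the transition between the regimes $D_k \le \sqrt{n}$ and $D_k \ge \sqrt{n}$ uniformly, since the integer-ball count, the LCD slack, and the net size change character across it.
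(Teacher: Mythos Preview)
Your high-level architecture (dyadic decomposition of the LCD level, an $\varepsilon$-net over each level, a per-point probability bound via anti-concentration plus tensorization, and a union bound) is exactly the paper's. The gap is in the per-point step: you invoke the \emph{basic} anti-concentration \Cref{anticonc}, whose validity requires $\varepsilon \ge \varepsilon_0 = \sqrt{\log m}/m$. Your own effective scale is $\varepsilon \asymp \beta \min(1,\sqrt{n}/D_k)$; once $D_k$ exceeds roughly $\beta\sqrt{n}\,m/\sqrt{\log m}$ this falls below $\varepsilon_0$, and for $D_k$ near $D_1=\beta(\alpha\beta m)^{\alpha n}$ it is exponentially small in $n$. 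So your assertion in (i) that ``$D_1$ is chosen precisely to ensure $\varepsilon\ge\varepsilon_0$'' is not correct, and the argument as written breaks down for all but a polynomial initial range of $D$.

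The paper repairs exactly this point by using the LCD-refined anti-concentration \Cref{anticonc-LCD} instead of \Cref{anticonc}. That lemma has threshold $\varepsilon \ge 1/(2\pi m D)$, which \emph{decreases} with $D$ and is therefore satisfied throughout $[D_0,D_1]$; the definition of $D_1$ is calibrated not to meet $\varepsilon_0$ but to make the first term $\varepsilon/\gamma$ dominate the additive $(\alpha\beta m)^{-\alpha n}$ in \Cref{anticonc-LCD}. Using that lemma, however, forces a change in the net: its hypotheses require the test vector to be a unit vector that is $(\alpha,\gamma)$-incompressible with LCD in $[D,2D]$. Hence the paper first restricts to $\SD$ (incompressible unit vectors with LCD in that band) and builds the net $\ND$ \emph{inside} $\SD$, rather than enumerating raw candidates $q/\|q\|_2$ as you do (those approximants need not themselves lie in $\SD$). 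Once you switch to \Cref{anticonc-LCD}, set $\gamma=\sqrt{\beta}$, and take the net inside $\SD$, the $D$-dependence cancels between net size and per-point bound exactly as you anticipated, yielding the stated $D^2(1/\alpha c)^n\beta^{cn}$.
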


We set $\gamma = \sqrt{\beta}$ throughout the section. We first condition on a number of bad events not holding.
Define:
\begin{align*}
&E_1 = \left[ \|M\| \ge \sqrt{n \log (1/\beta)} \right]\\
&E_2 = \left[ X^* \text{ is } \left(5 \alpha, \beta\right)\text{-compressible} \right]\\
&E_3 = \left[ X^* \text{ is } \left(\alpha, \gamma\right)\text{-compressible} \right]
\end{align*}
Applying \Cref{largest} for $E_1$, and
\Cref{compress} for $E_2,E_3$, we get that
$$
\Pr[E_1 \text{ or } E_2 \text{ or } E_3] \le \beta^{cn}.
$$
Thus, we will assume in this section that none of $E_1,E_2,E_3$ hold.
Assuming $\neg E_2$,
\Cref{compressible_LCD} yields that $\LCD_{\alpha,\beta}(X^*) \ge D_0$. For $D \ge D_0$ define
$$
\SD = \left\{x \in S^{n-1}: \LCD_{\alpha,\beta}(x) \in [D, 2D] \text{ and } x \text{ is } \left(\alpha, \gamma\right)\text{-incompressible} \right\}.
$$

The following is an analog of Lemma 7.2 in \cite{rudelson2013lecture}.

\begin{claim}
\label{SDnet}
Let $D \ge D_0$ and set $\nu = 6 \beta \sqrt{n} / D$.
There exists a $\nu$-net $\ND \subset \SD$ of size
$$
|\ND| \le (D/\beta) \left(\frac{cD}{\sqrt{\alpha n}}\right)^n (1/\beta)^{\alpha n}.
$$
Namely, for each $x \in \SD$ there exists $y \in \ND$ that satisfies $\|x-y\|_2 \le \nu$.
\end{claim}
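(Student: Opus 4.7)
The plan is a volumetric covering argument adapted to the structure provided by the LCD: I parameterize each $x \in \mathcal{S}_D$ via its LCD decomposition, independently discretize each parameter, and take the resulting rational vectors (modified slightly to lie in $\mathcal{S}_D$) as net points.

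For the parameterization, fix $x \in \mathcal{S}_D$. Since $\LCD_{\alpha,\beta}(x) \in [D, 2D]$, one may pick $D^*$ essentially in $[D,2D]$ such that $\{D^*x\} = u + v$ with $u$ being $(\alpha n)$-sparse and $\|v\|_2 \le \beta \min(D^*,\sqrt{n}) \le \beta\sqrt{n}$; a minimality argument (the smallest-$\ell_2$ choice of $v$ is obtained by throwing into $u$ the $\alpha n$ largest-magnitude coordinates of $\{D^*x\}$) lets me take $u = \{D^*x\}|_T$ for such a set $T$, so in particular $\|u\|_\infty \le 1/2$. Writing $p = [D^*x] \in \Z^n$ gives $D^*x = p + u + v$, so $x$ agrees with the vector $(p+u)/D^*$ up to $\ell_2$-error $\|v\|_2 / D^* \le \beta\sqrt{n}/D$. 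Every $x \in \mathcal{S}_D$ is thus captured by a tuple $(D^*, p, T, u)$.

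To build the net I would discretize each parameter: take $\tilde D$ on a grid of spacing $\beta$ in $[D, 2D]$, contributing $D/\beta$ choices; enumerate $p \in \Z^n$ using the crude bound $\|p\|_2 \le \|D^*x\|_2 + \|\{D^*x\}\|_2 \le 2D + \sqrt{n}/2 = O(D/\sqrt{\alpha})$ (using $D \ge D_0 = \sqrt{\alpha n}$), which by \Cref{int_ball} gives at most $(c_1 D/\sqrt{\alpha n})^n$ options (the additive $1$ in that claim is absorbed since $D/\sqrt{\alpha n} \ge 1$); pick the support $T$ of $\tilde u$ in at most $\binom{n}{\alpha n} \le (e/\alpha)^{\alpha n}$ ways; and approximate $u|_T$ by $\tilde u$ on a grid of spacing $\beta/\sqrt{\alpha}$ in $[-1/2, 1/2]^T$, contributing $(\sqrt{\alpha}/\beta)^{\alpha n}$ choices. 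Multiplying all four counts and folding the combined $\alpha$-dependent factor $(e/\alpha)^{\alpha n}(\sqrt{\alpha})^{\alpha n} = [(e/\sqrt{\alpha})^\alpha]^n$, which is $C^n$ for an absolute constant $C$, into the unspecified constant $c$ yields the claimed bound $(D/\beta)(cD/\sqrt{\alpha n})^n(1/\beta)^{\alpha n}$.

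Finally, I would verify the covering radius. For the candidate $\tilde x = (p+\tilde u)/\tilde D$, the triangle inequality splits $\|x - \tilde x\|_2$ into (i) $\|v\|_2/D^* \le \beta\sqrt{n}/D$, (ii) $\|u-\tilde u\|_2/D^* \le (\beta/\sqrt{\alpha})\sqrt{\alpha n}/D = \beta\sqrt{n}/D$, and (iii) a rescaling correction $\|p+\tilde u\|_2 \cdot |1/D^* - 1/\tilde D| = O(D/\sqrt{\alpha})\cdot O(\beta/D^2) = O(\beta/(\sqrt{\alpha} D))$, all dominated by $\beta\sqrt{n}/D$. Adding these with a little slack shows $\|x-\tilde x\|_2 \le 6\beta\sqrt{n}/D = \nu$. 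To guarantee $\mathcal{N}_D \subset \mathcal{S}_D$, I would first build the above cover with radius $\nu/2$ (refining the grids by an absolute constant factor, absorbed into $c$), and for each candidate $\tilde x$ whose ball of radius $\nu/2$ meets $\mathcal{S}_D$ include one witness from that intersection in $\mathcal{N}_D$; the triangle inequality then gives a $\nu$-cover of $\mathcal{S}_D$ by elements of $\mathcal{S}_D$. The main obstacle is really just the bookkeeping: verifying that the minimality argument justifying the canonical choice of $T$ is compatible with the infimum definition of $\LCD_{\alpha,\beta}$, and that all the $\alpha$-dependent nuisance factors collapse into the specific stated form.
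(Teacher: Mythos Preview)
Your proposal is correct and follows essentially the same route as the paper: both parameterize $x\in\SD$ by the triple (LCD value, integer part $[D^*x]$, sparse part $u$), discretize each piece on a $\beta$-scale grid, and multiply the counts, then pass to a net contained in $\SD$. The only cosmetic differences are that the paper uses a plain $\beta$-grid for $u$ (bounding $\binom{n}{\alpha n}\le 2^n$ crudely rather than tracking the $\alpha$-factors as you do), and obtains $\ND\subset\SD$ via a maximal $\nu$-separated subset of $\SD$ rather than your witness-replacement trick; your worry about the infimum in the LCD definition is handled in the paper simply by taking $D(x)=\LCD_{\alpha,\beta}(x)$ and the corresponding decomposition, which is harmless since one may work at any $D^*$ arbitrarily close to the infimum.
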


\begin{proof}
Let $x \in \SD$ and shorthand $D(x) = \LCD_{\alpha,\beta}(x)$. By definition, we can decompose $\{D(x) x\} = u+v$ where $u$ is $(\alpha n)$-sparse and $\|v\|_2 \le \beta \min(D, \sqrt{n}) \le \beta \sqrt{n}$.

Let $W$ denote the set of $(\alpha n)$-sparse vectors $w \in [-1/2,1/2]^n$ such that each $w_i$ is an integer multiple of $\beta$. Then $|W| \le \binom{n}{\alpha n} (1/\beta)^{\alpha n}$, and
there exists $w \in W$ such that $\|u - w\|_{\infty} \le \beta$, which implies $\|u-w\|_2 \le \beta \sqrt{n}$.
This implies that
$$
\|\{D(x) x\} - w\|_2 \le 2 \beta \sqrt{n}.
$$

Next, consider $[D(x) x] \in \Z^n$. As $|[a]| \le 2 |a|$ for all $a \in \Z$, we have $\|[D(x) x]\|_2 \le 2 D(x) \|x\|_2 \le 4D$. Let $Z = \{z \in \Z^n: \|z\|_2 \le 4D\}$. Then $[D(x) x] \in Z$, and \Cref{int_ball} bounds $|Z| \le \left(1 + \frac{c_1 D}{\sqrt{n}}\right)^n$. So there is $z \in Z$ such that
$$
\|D(x) x - z - w\|_2 \le 2 \beta \sqrt{n}.
$$

Next, let $R$ be set of integer multiples of $\beta$ in the range $[D, 2D]$, so that $|R| \le D / \beta$ and there exists $r \in R$ with $|D(x)-r| \le \beta$. As $\|x\|_2 = 1$ we have
$$
\|r x - z - w\|_2 \le 2 \beta \sqrt{n} + \beta \le 3 \beta \sqrt{n}.
$$

Finally, define the set
$$
Y = \{(z+w)/r: z \in Z, w \in W, r \in R\}.
$$
Then there exists $y \in Y$ such that
$$
\|x - y\|_2 \le 3 \beta \sqrt{n} / D = \nu/2.
$$
Take a maximal set $\ND \subset \SD$ which is $\nu$-separated. That is, for any $x',x'' \in \ND$ we have $\|x'-x''\|_2>\nu$. Note that
by maximality, $\ND$ is a $\nu$-net in $\SD$. Next, note that $|\ND| \le |Y|$, as any point $x \in \ND$ must be $(\nu/2)$-close to a distinct point in $Y$. To conclude, we need to bound $|Y|$. We have
$$
|Y| \le |W| |Z| |R| \le {n \choose \alpha n} (1/\beta)^{\alpha n} \cdot \left(1 + \frac{c D}{\sqrt{n}}\right)^n \cdot (D / \beta).
$$
As $D \ge D_0 = \sqrt{\alpha n}$ we can simplify $1 + \frac{c D}{\sqrt{n}} \le \frac{(c+1) D}{\sqrt{\alpha n}}$. We can trivially bound ${n \choose \alpha n} \le 2^n$. Hence
$$
|\ND| \le |Y| \le (D/\beta) \left(\frac{2(c+1) D}{\sqrt{\alpha n}}\right)^n (1/\beta)^{\alpha n}.
$$
\end{proof}

\begin{claim}
\label{normal_SD}
For any $D \in [D_0,D_1]$ we have
$$
\Pr\left[X^* \in \SD \text{ and } \neg E_1 \right] \le
D^2 \left(c / \alpha\right)^n \beta^{n/8}.
$$
\end{claim}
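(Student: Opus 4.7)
The plan is a standard $\eps$-net argument, covering $\SD$ by the net $\ND$ constructed in \Cref{SDnet}. Suppose $X^* \in \SD$ and $\neg E_1$ holds, and pick $y \in \ND$ with $\|X^* - y\|_2 \le \nu = 6\beta\sqrt{n}/D$. Since $M' X^* = 0$ by construction,
\[
\|M' y\|_2 = \|M'(y - X^*)\|_2 \le \|M'\| \cdot \nu \le 6\beta n \sqrt{\log(1/\beta)} / D.
\]
Thus the target probability is at most $\sum_{y \in \ND} \Pr[\|M' y\|_2 \le \eps\sqrt{n-1}]$ where $\eps = C \beta \sqrt{n \log(1/\beta)}/D$ for an appropriate absolute constant $C$.

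For a single $y \in \ND \subset \SD$, the vector $y$ is $(\alpha, \gamma)$-incompressible with $\gamma = \sqrt{\beta}$, and $\LCD_{\alpha,\beta}(y) \ge D$. I would apply \Cref{anticonc-LCD} to each row $X_i$ of $M'$: for $\eta \ge 1/(2\pi m D)$,
\[
\Pr[|\ip{X_i, y}| \le \eta] \le c\left(\eta/\sqrt{\beta} + 1/(\alpha \beta m)^{\alpha n}\right).
\]
When $\eta \ge \sqrt{\beta}/(\alpha\beta m)^{\alpha n}$, the second term gets absorbed into the first, yielding a clean linear bound with $K = O(1/\sqrt{\beta})$. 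The specific upper bound $D \le D_1 = \beta(\alpha\beta m)^{\alpha n}$ is exactly what guarantees that the $\eps$ from the first step exceeds both thresholds. The tensorization lemma \Cref{tensorization}, applied to the $n-1$ independent rows, then gives the single-point bound $(c \eps /\sqrt{\beta})^{n-1}$.

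Finally I would multiply by the net size. From \Cref{SDnet}, $|\ND| \le (D/\beta)(c D/\sqrt{\alpha n})^n (1/\beta)^{\alpha n}$, and multiplying against $(c\eps/\sqrt{\beta})^{n-1}$ with $\eps = C\beta\sqrt{n\log(1/\beta)}/D$, the powers of $D$ telescope to $D^2$, the powers of $\sqrt{n}$ nearly cancel, and the $\alpha$-dependence is at worst $\alpha^{-n/2}$, which fits inside $(c/\alpha)^n$. The $\beta$ exponent becomes $-1 - \alpha n + (n-1)/2$; under $\alpha < 1/40$ and $n$ large this exceeds $n/4$, which more than absorbs the trailing $\log(1/\beta)^{(n-1)/2}$ factor since $\log(1/\beta) = \beta^{-o(1)}$ as $\beta \to 0$. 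This yields a bound of the form $D^2 (c/\alpha)^n \beta^{n/8}$, as desired.

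The main technical obstacle is the second step: carefully checking that the chosen $\eps$ is above both thresholds in \Cref{anticonc-LCD} for every $D \in [D_0, D_1]$. This bookkeeping is exactly what dictates the form of $D_1$, and once done the union-bound arithmetic in step three is routine.
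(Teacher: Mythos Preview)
Your proposal is correct and follows essentially the same route as the paper's proof: approximate $X^*$ by a point $y\in\ND$, use $\neg E_1$ to bound $\|M'y\|_2$, apply \Cref{anticonc-LCD} to each row, tensorize, and union-bound over $\ND$. The only cosmetic difference is that the paper absorbs the $\sqrt{\log(1/\beta)}$ factor up front by noting $6\beta\sqrt{\log(1/\beta)}\le\beta^{3/4}$ for $\beta$ small enough, whereas you carry it through and kill it at the end via $\log(1/\beta)=\beta^{-o(1)}$; both are fine.
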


\begin{proof}
First, note that we may assume $\beta \le \beta_0$ for any absolute constant $\beta_0 \in (0,1)$, by choosing the constant $c>0$ large enough to compensate for that (namely, taking $c \ge 1/\beta_0$). In particular, setting $\beta_0 = 2^{-20}$ works.

If $X^* \in \SD$ then there exists $y \in \ND$ such that $\|X^* - y\|_2 \le \nu$ for $\nu = 6 \beta \sqrt{n} / D$.
By definition of $X^*$ we have $M' X^*=0$, and as we assume that $\neg E_1$ hold, we have
$$
\|M' y\|_2 \le \|M'\| \|X^* - y\|_2 \le \nu \sqrt{n \log (1/\beta)}.
$$
Set $\beta_1 = 6 \beta \sqrt{\log(1/\beta)}$. The assumption $\beta \le \beta_0$ implies that $\beta_1 \le \beta^{3/4}$.
Set $\delta = \beta^{3/4} \sqrt{n} / D$.
We will bound the probability that there exists $y \in \ND$ such that $\|M' y\|_2 \le \delta \sqrt{n}$.

Fix $y \in \ND$, let $X \sim \D^n$, and define $p(\eps)=\Pr[|\ip{X,y}|] \le \eps$.
As $y \in \ND \subset \SD$ we have that $y$ is $(\alpha,\gamma)$-incompressible, and hence we can apply \Cref{anticonc-LCD}, which gives
$$
p(\eps) \le c_1 \left(\frac{\eps}{\gamma} + \frac{1}{(\alpha \beta m)^{\alpha n}} \right) 
\qquad \text{for all } \eps \ge 1 / 2 \pi m D.
$$
Next, we restrict attention to only $\eps \ge \delta$, and note that in this regime the first term is dominant
(since $D \le D_1$ we have $\delta \ge \beta^{3/4} \sqrt{n}/D_1 \ge 1 / (\alpha \beta m)^{\alpha n}$).
We can then simplify the bound as
$$
p(\eps) \le \frac{c_2 \eps}{\gamma}  \qquad \text{for all } \eps \ge \delta.
$$
Applying \Cref{tensorization}, and recalling that we set $\gamma=\sqrt{\beta}$, gives 
$$
\Pr\left[\|M' y\|_2 \le \delta \sqrt{n}\right] \le \left( \frac{c_3 \delta}{\gamma} \right)^{n-1}
= \left( \frac{c_4 \beta^{1/4} \sqrt{n}}{D} \right)^{n-1}.
$$
Union bounding over all $y \in \ND$, using \Cref{SDnet} to bound its size, gives
\begin{align*}
\Pr[\exists y \in \ND, \|M' y\|_2 \le \delta \sqrt{n}] &
\le (D/\beta) \left(\frac{cD}{\sqrt{\alpha n}}\right)^n (1/\beta)^{\alpha n} \cdot
\left( \frac{c_4 \beta^{1/4} \sqrt{n}}{D} \right)^{n-1} \\
& \le D^2 \left(c_5 / \sqrt{\alpha}\right)^n \beta^{n/4 - \alpha n - 2}.
\end{align*}
Our assumption $\alpha < 1/40$ and the implicit assumption $\alpha n \ge 2$ imply that $\alpha n + 2 \le n/8$,
which simplifies the above bound to the claimed bound.
\end{proof}

We are now in place to prove \Cref{LCD_large}.

\begin{proof}[Proof of \Cref{LCD_large}]
We may assume that non of $E_1,E_2,E_3$ hold, as
the probability that any of them hold is at most $\beta^{c_1 n}$ for some absolute constant $c_1 \in (0,1)$. 
This in particular implies that $\LCD_{\alpha,\beta}(X^*) \ge D_0$.
Fix $D \in [D_0, D_1]$. As $D \le D_1$ we can applying \Cref{normal_SD} to $D_i = 2^i D_0$ as long as $D_i \le D/2$. 
Summing the results we get
$$
\Pr\left[\LCD_{\alpha,\beta}(X^*) \le D \text{ and } \neg E_1, \neg E_2, \neg E_3\right]
\le  (2D)^2 (c_2 / \alpha)^n \beta^{n/8}.
$$
Thus overall we have
$$
\Pr\left[\LCD_{\alpha,\beta}(X^*) \le D\right] 
\le \beta^{c_1 n} + (2D)^2 (c_2 / \alpha)^n \beta^{n/8}.
$$
The lemma follows by taking $c \in (0,1)$ small enough.
\end{proof}

\section{Completing the proof}
\label{sec:proof_complete}

We now prove \Cref{main}.

\begin{proof}[Proof of \Cref{main}]
Fix $\alpha = 1/50, \beta = 1/\sqrt{m}$ and assume $m \ge m_0$ for a large enough constant $m_0$ to be determined soon.
Let $D$ to be determined soon. \Cref{LCD_large} gives
$$
\Pr[\LCD_{\alpha,\beta}(X^*) \le D] \le D^2 (1 / \alpha c_1 )^n \beta^{c_1 n}.
$$
As $\alpha$ is constant, and using the choice $\beta=1/\sqrt{m}$, we can simplify the bound as follows. For a small enough constant $c \in (0,1)$, setting $D=m^{cn}$ and $c_2=1/\alpha c_1$, we have
$$
\Pr[\LCD_{\alpha,\beta}(X^*) \le m^{cn}] \le m^{2cn} c_2^n m^{-(c_1/2) n} \le c_2^n m^{-(c_1/2 - 2c) n} \le c_2^n m^{-cn}.
$$
Assuming $m \ge m_0$ for a large enough constant $m_0$, we can simplify this bound further as
$$
\Pr[\LCD_{\alpha,\beta}(X^*) \le m^{cn}] \le m^{-(c/2) n}.
$$

Next, assume $D = \LCD_{\alpha,\beta}(X^*) \ge m^{c n}$. In this case, \Cref{anticonc-LCD} for $\eps=1/2 \pi m D$ gives that
$$
\Pr[\ip{X^*, X} = 0 ] \le \Pr[|\ip{X^*, X}| \le \eps ] \le c_3 \left(\frac{\eps}{\gamma} + \frac{1}{(\alpha \beta m)^{\alpha n}} \right)
\le m^{-c' n}
$$
for some $c' \in (0,1)$. Overall we obtain the desired bound.
\end{proof}

\section*{Acknowledgement}
We would like to thank Roman Vershynin and Konstantin Tikhomirov for helpful discussions.  

\bibliographystyle{abbrv}
\bibliography{matrices}

\end{document}